\documentclass[letterpaper]{article} 
\usepackage{aaai24}  
\usepackage{times}  
\usepackage{helvet}  
\usepackage{courier}  
\usepackage[hyphens]{url}  
\usepackage{graphicx} 
\urlstyle{rm} 
\usepackage{natbib}  
\usepackage{caption} 
\frenchspacing  
\setlength{\pdfpagewidth}{8.5in} 
\setlength{\pdfpageheight}{11in} 
%

%
\usepackage{newfloat}
\usepackage{listings}
\DeclareCaptionStyle{ruled}{labelfont=normalfont,labelsep=colon,strut=off} 
\lstset{%
	basicstyle={\footnotesize\ttfamily},
	numbers=left,numberstyle=\footnotesize,xleftmargin=2em,
	aboveskip=0pt,belowskip=0pt,%
	showstringspaces=false,tabsize=2,breaklines=true}

%
\pdfinfo{
/TemplateVersion (2024.1)
}

\usepackage{soul}
\usepackage{url}
\usepackage[utf8]{inputenc}
\usepackage{amsmath}
\usepackage{amsthm}
\usepackage{booktabs}

\usepackage{xcolor}
\usepackage{mathrsfs}
\usepackage{booktabs}
\usepackage[ruled]{algorithm2e}
\usepackage{multirow}
\usepackage{array}


\urlstyle{same}
\SetKwInput{KwInput}{Input}
\SetKwInput{KwOutput}{Output}
\DeclareMathOperator{\OPT}{OPT}
\DeclareMathOperator{\argmin}{\arg\min}


\newtheorem{theorem}{Theorem}
\newtheorem{corollary}{Corollary}
\newtheorem{definition}{Definition}
\newtheorem{lemma}{Lemma}




\usepackage[utf8]{inputenc} 
\usepackage[T1]{fontenc}    
\usepackage{url}            
\usepackage{booktabs}       
\usepackage{amsfonts}       
\usepackage{nicefrac}       
\usepackage{microtype}      
\usepackage{xcolor}         

\setcounter{secnumdepth}{0} 

%


\title{Cost Minimization for Equilibrium Transition}
\author {
    Haoqiang Huang\textsuperscript{\rm 1},
    Zihe Wang\textsuperscript{\rm 2}\thanks{Zihe Wang is the corresponding author.},
    Zhide Wei\textsuperscript{\rm 3},
    Jie Zhang\textsuperscript{\rm 4}
}
\affiliations {
    \textsuperscript{\rm 1}Hong Kong University of Science and Technology\\
    \textsuperscript{\rm 2}Renmin University of China\\
    \textsuperscript{\rm 3}Peking University\\
    \textsuperscript{\rm 4}University of Bath\\
    hhuangbm@connect.ust.hk, 
    wang.zihe@ruc.edu.cn,
    zhidewei@pku.edu.cn,
    jz2558@bath.ac.uk
}

\usepackage{bibentry}

\begin{document}

\maketitle

\begin{abstract}
  In this paper, we delve into the problem of using monetary incentives to encourage players to shift from an initial Nash equilibrium to a more favorable one within a game. Our main focus revolves around computing the minimum reward required to facilitate this equilibrium transition. The game involves a single row player who possesses $m$ strategies and $k$ column players, each endowed with $n$ strategies. Our findings reveal that determining whether the minimum reward is zero is NP-complete, and computing the minimum reward becomes APX-hard. Nonetheless, we bring some positive news, as this problem can be efficiently handled if either $k$ or $n$ is a fixed constant. Furthermore, we have devised an approximation algorithm with an additive error that runs in polynomial time. Lastly, we explore a specific case wherein the utility functions exhibit single-peaked characteristics, and we successfully demonstrate that the optimal reward can be computed in polynomial time.
\end{abstract}

\section{Introduction}\label{Sec: Intro}
	Equilibrium analysis has remained a central focus in game theory research for several decades. The exploration of the fundamental Nash equilibrium and its various refinements has been the subject of extensive study. It is widely recognized that certain equilibria hold greater desirability than others, prompting investigations into the Price-of-Anarchy \cite{DBLP:journals/csr/KoutsoupiasP09} and Price-of-Stability \cite{DBLP:journals/siamcomp/AnshelevichDKTWR08}. In the dynamics of the game, players are driven to best respond to their counterparts' strategies \cite{hopkins1999note,leslie2020best}. Nevertheless, this approach may lead to suboptimal equilibria, as players overlook the potential benefits that could arise from deviating from the best response dynamics in response to successive changes made by other players.
	
	In our research, we explore a scenario where a mediator aims to facilitate the transition from an initial equilibrium to a more favorable target equilibrium. The mediator achieves this by subsidizing the players to influence their behavior, encouraging them to progress towards the desirable equilibrium in gradual steps. Throughout this process, the mediator's primary objective is to minimize the overall cost involved. This study holds significant real-world applications. For instance, many governments worldwide are actively pursuing initiatives to boost the adoption of electric vehicles as part of their net-zero plans. Given that cars and vans contribute nearly a fifth of total emissions, expediting the transition to electric vehicles is crucial to accomplishing their environmental goals. One can liken the initial equilibrium to the prevailing usage of petrol and diesel cars. To drive the shift towards electric vehicles, the government may implement tax benefits and provide funds for charger installations, thus incentivizing drivers to make the switch. The more desirable target equilibrium, in this case, would involve completely phasing out the sale of new petrol and diesel cars.

   In this intriguing game, we have a row player and $k$ column players, each with specific payoffs represented by matrices $R$ and $C$ respectively. The game commences from an initial equilibrium and proceeds in rounds. As a mediator, the ultimate objective is to lead the players towards a desirable Nash equilibrium by offering rewards in each round. Even in the case of just one row player, the theoretical implications are already noteworthy. In practical terms, this single row player scenario effectively captures numerous monopolistic markets and the regulatory behavior of governments in markets. However, as we will demonstrate, the problem at hand becomes intractable in various setups. Thus, achieving positive outcomes in more general settings necessitates the application of additional constraints or the consideration of special cases.
    

	\subsection{Our contribution}
	In this paper, we tackle the challenge of designing algorithms for computing the minimum reward needed to foster equilibrium transitions. Given a bimatrix game with payoff matrices $R_{m \times n}$ and $C_{m \times n}$, assuming that there are $k$  column players playing against a row player, we show the following results: 
	\begin{itemize}
		\item Determining whether the minimum reward is zero is NP-complete, and computing the minimum reward, in general, is APX-hard.
		\item However, computing the optimal reward scheme is slicewise polynomial with respect to $k$ and $n$, respectively.
		\item We design an approximation algorithm for this problem that runs in polynomial time. The additive approximation error is linear in the number of the row player's choices and the largest number of matrices $R$ and $C$.
		\item Last, we consider a special case where the utility functions are single-peaked and show that the optimal reward can be computed in polynomial time. 
	\end{itemize}
	We show the hardness results by reductions from the EXACT COVER problem and a variant of the Knapsack problem. In the approximation algorithm, we construct a complete directed graph in which the weight of an edge corresponds to the solution of an integer linear programming (ILP). By approximating the solution of the ILP, the problem of finding the optimal transformation path from an initial equilibrium to a target equilibrium boils down to finding the shortest paths between vertices of the graph. 
	
	
	\subsection{Related Work}
	 \citet{DBLP:journals/jair/MondererT04} consider the implementation of desirable outcomes by a reliable party who cannot modify game rules but can make non-negative payments to the players. They term this $k$-implementation problem an intermediate approach between algorithmic game theory \cite{DBLP:conf/www/Papadimitriou11} and mechanism design \cite{DBLP:conf/stoc/NisanR99}. They provide characterizations of $k$-implementation for the implementation of singletons in games with complete information and investigate several settings under which the problem is polynomial-time solvable or intractable. 
	\citet{DBLP:conf/atal/DengTZ16} follow up the study of $k$-implementation and prove that the problem is NP-complete for general games with respect to dominance by pure strategies. Furthermore, the authors study a variation of the $k$-implementation problem by characterizing its hardness and developing computationally efficient algorithms for supermodular games.
    Unlike $k$-implementation, which considers a single-round bi-matrix game, we aim to motivate a group of players to move to a target equilibrium step by step and minimize the total cost.
	\citet{DBLP:conf/aaai/DengC17} consider a disarmament game, in which players successively commit not to play certain strategies and thereby iteratively reduce their strategy spaces. 
	Later on, in \cite{DBLP:conf/aaai/DengC18}, instead of removing a strategy in a game, they consider removing a resource which leads to ruling out all the strategies in which that resource is used simultaneously. They prove NP-completeness of several formulations of the problem of achieving desirable outcomes via disarmament.
	
	The idea of allowing a mediator to influence the players' behavior and hence the outcome of a system has been widely studied. For example,  \citet{DBLP:conf/ijcai/RozenfeldT07} focus on the use of routing mediators in order to reach a correlated strong equilibrium. They show that a natural class of routing mediators allows the implementation of fair and efficient outcomes as a correlated super-strong equilibrium in a very wide class of games. 
	\citet{DBLP:journals/ai/MondererT09} propose to use mediators in order to enrich the set of situations where one can obtain stability against deviations by coalitions, in light of the understanding that strong equilibrium rarely exists. 
	\citet{DBLP:journals/algorithmica/AugustineCFK15} address the question of whether a network designer can enforce particular equilibria or guarantee that efficient designs are consistent with users' selfishness by appropriately subsidizing some of the network links. They formulate this question as one of the optimization problems and present positive and negative results.
	\citet{DBLP:conf/cocoa/EidenbenzOSW07} consider the problem of a mechanism designer seeking to influence the outcome of a strategic game based on her creditability. 

      Studying the best response dynamics of players constitutes a fundamental aspect of game theory research. In a recent study, \citet{DBLP:journals/mor/AmietCSZ21} explore games where the payoffs are drawn at random and demonstrate that a best-response dynamics approach leads to a pure Nash equilibrium with a high probability as the number of players increases. \citet{DBLP:conf/sagt/FeldmanST17} delve into congestion games, where they investigate the inefficiency of various deviator rules. They find that the best response dynamics consistently converges to a pure Nash equilibrium in such games. \citet{heinrich2022bestresponse} analyze the performance of the best-response dynamic across all normal-form games using a random games approach. They show that the best-response dynamic converges to a pure Nash equilibrium in a vanishingly small fraction of all large games when players take turns according to a fixed cyclic order. By contrast, when the playing sequence is random, the dynamic converges to a pure Nash equilibrium if one exists in almost all large games.

\section{Preliminary}\label{Sec: Pre}
	We consider a game in which there is a row player and $k$ column players. The row player and each of the $k$ column players constitute a bimatrix game. Let $\mathscr{R}$ be the strategy set of the row player and $\mathscr{C}$ be the strategy set of a column player. Denote $\mathscr{C}^{k}$ the set of all possible strategy profiles of $k$ column players. Let $r^{i}$ and $c^{j}$ be the row player's $i$-th strategy and a column player's $j$-th strategy, respectively, where $i=1,\dots,m$ and $j=1,\dots,n$. Throughout the paper, the players only adopt pure strategies. Let $R_{m\times n}$ and $C_{m\times n}$ be the payoff matrices of the row player and the column players, respectively.  Denote $(r(t), c_1(t),\dots, c_k(t))$ the strategy profile of all players at time step $t$, where $r(t) \in \mathscr{R}$ and $c_i(t)\in \mathscr{C}, i=1,\dots,k$. For ease of notation, we denote $\mathcal{C}(t) = (c_1(t),\dots,c_k(t)) \in \mathscr{C}^{k}$. This way, the row player's payoff at time $t$ is $\sum_{1\le i\le k}R(r(t), c_i(t))$, i.e., the sum of its payoff in the $k$ bimatrix games. The payoff of the column player $i$ is $C(r(t), c_i(t))$, i.e., its payoff in the bimatrix game that it plays against the row player.
	
	In the context of equilibrium transition, the game starts with an equilibrium. That is, the row player and the column players are at an equilibrium in each of the $k$ bimatrix games. Assume that, from the mediator's perspective, there exists another more desirable equilibrium. The mediator is interested in designing an optimal reward scheme that motivates the players to move to the more desirable equilibrium over multiple rounds.
	
	Specifically, consider a strategy profile $(r(t), \mathcal{C}(t))$ in round $t$, without any additional reward, the row player will best respond to the column players' strategy profile $\mathcal{C}(t)$ in the next round. Therefore, to incentivize the row player playing a specific strategy $r(t+1) \in \mathscr{R}$ in the next round, we need to provide a reward of, denoted by $T_{\mathcal{C}(t)}(r(t+1))$,
	\begin{align*}
		\max_{r^i\in \mathscr{R}} \sum_{1\le j\le k}R(r^i,c_j(t))-\sum_{1\le j\le k}R(r(t+1),c_j(t)),
	\end{align*}
	where the first term is the maximum payoff that the row player can get by taking the best response strategy against column players' strategies $\mathcal{C}(t)$, and the second term is the row player's payoff when it takes the strategy $r(t+1) \in \mathscr{R}$. 
	
	

	Similarly, given the row player's strategy $r(t)$ in round $t$, to incentivize column players taking strategy profile $\mathcal{C}(t+1)$ in round $t+1$, the total reward needed is
	$$T_{r(t)}(\mathcal{C}(t+1)) := k\cdot \max_{c^i\in \mathscr{C}} C(r(t),c^i)-\sum_{1\le j\le k}C(r(t),c_j(t+1)).$$
	
	
	Given the initial equilibrium $(r(1),\mathcal{C}(1))$ and the target equilibrium $(r^{*}, \mathcal{C}^{*})$, a reward scheme that incentivizes the players moving from the initial equilibrium to the target equilibrium consists of a \emph{transformation path} 
	$(r(1),\mathcal{C}(1)) \rightarrow (r(2), \mathcal{C}(2)) \rightarrow \dots \rightarrow (r^{*}, \mathcal{C}^{*})$. The total cost of this reward scheme is $T = \sum_t \{ T_{r(t)}(\mathcal{C}(t+1)) + T_{\mathcal{C}(t)}(r(t+1)) \}$.
	
We define the optimization problem OPT TRANSITION $(k,m,n)$ as follows.
	
	\noindent {\bf Problem 1:} OPT TRANSITION $(k,m,n)$.\\
	{\bf Input:} Payoff matrices $R_{m\times n}$ and $C_{m\times n}$. The initial equilibrium $(r(1),\mathcal{C}(1))$ and the target equilibrium $(r^{*}, \mathcal{C}^{*})$, \\
	{\bf Output:} A transformation path from strategy profile $(r(1),\mathcal{C}(1))$ to $(r^{*}, \mathcal{C}^{*})$ such that the total cost 
	$T = \sum_t \{ T_{r(t)}(\mathcal{C}(t+1)) + T_{\mathcal{C}(t)}(r(t+1)) \}$ is minimized.
	
	We also consider the following decision problem, called TRANSITION $(k,m,n,T)$.
	
	\noindent {\bf Problem 2:} TRANSITION $(k,m,n,T)$.\\
	{\bf Input:} Payoff matrices $R_{m\times n}$ and $C_{m\times n}$. The initial equilibrium $(r(1),\mathcal{C}(1))$ and the target equilibrium $(r^{*}, \mathcal{C}^{*})$, \\
	{\bf Output:} YES, if a transformation path from $(r(1),\mathcal{C}(1))$ to $(r^{*}, \mathcal{C}^{*})$ with the total cost no larger than
	$T$ exists, and NO otherwise.
	

	\section{Complexity Results}\label{Sec: Hardness}
	This section investigates the complexity of the above two problems. We show that OPT TRANSITION $(k,m,n)$ is APX-hard, which discourages us from designing efficient algorithms that can find a solution within some fixed multiplicative factor of the optimal cost $T$. In addition, even for the case that the row player has only two strategies, the decision problem TRANSITION $(k,2,n,T)$ is NP-complete. However, OPT TRANSITION $(k,m,n)$ becomes polynomial-time solvable when either $k$ or $n$ is a fixed constant. 
	
	\subsection{General Values of $k,m$, and $n$}
	We show that the decision problem TRANSITION $(k,m,n,T)$ is NP-complete when $T=0$. 
	\begin{theorem}
		TRANSITION $(k,m,n,0)$ is NP-complete.
	\end{theorem}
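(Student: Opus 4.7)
The plan is to establish membership in NP and then to prove hardness by a reduction from the EXACT COVER problem (specifically X3C).

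For NP membership, a candidate transformation path serves as the certificate: given such a sequence, one verifies in polynomial time that it begins at $(r(1),\mathcal{C}(1))$, ends at $(r^{*},\mathcal{C}^{*})$, and that every per-round cost $T_{\mathcal{C}(t)}(r(t+1))+T_{r(t)}(\mathcal{C}(t+1))$ evaluates to zero directly from the payoff matrices. The one delicate point is bounding the length of the certificate by a polynomial in the input size; I would argue that a repeated profile along any zero-cost path can be short-circuited without changing the endpoints or increasing the cost, so it suffices to consider paths no longer than the polynomially many profiles actually used by the gadget constructed below.

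For NP-hardness I would reduce from X3C. Given a universe $U=\{u_1,\dots,u_{3q}\}$ and a family $\mathcal{F}=\{S_1,\dots,S_{\ell}\}$ of $3$-element subsets, the plan is to take $k=3q$ column players (one per element) and to give the row player $m=\ell+O(1)$ strategies, namely one ``set strategy'' per $S_j$ plus a dedicated initial strategy and a target strategy. The payoff matrices $R$ and $C$ are engineered so that a zero-cost transition $(r(t),\mathcal{C}(t))\rightarrow(r(t+1),\mathcal{C}(t+1))$ is possible only when (a) $r(t+1)$ coincides either with some not-yet-used set strategy $S_j$ or with the target strategy once all elements are covered, and (b) the column players indexed by the elements of the set currently occupied by the row player are the only ones that may flip at zero cost from an ``uncovered'' strategy to a ``covered'' strategy. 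The target equilibrium is then the configuration in which every column player is ``covered'' and the row player plays the target strategy.

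Under this construction, a zero-cost path corresponds exactly to a sequence of sets from $\mathcal{F}$ whose pairwise-disjoint union is $U$, i.e., an exact $3$-cover, so the equivalence between YES-instances of X3C and YES-instances of TRANSITION$(k,m,n,0)$ follows. The main obstacle is fine-tuning $R$ and $C$ so that the best-response structure enforces these constraints without opening unintended zero-cost side moves: the row player must be indifferent among the appropriate set strategies given the current column profile, yet strictly deterred from reverting to a previous set or from stepping to a different set while some column players are already in the ``covered'' state, and on the column side the flip from ``uncovered'' to ``covered'' must be free only while the row player occupies the matching set strategy. I expect the bulk of the technical work to be in laying out these tie and strict-inequality constraints consistently; this same gadget should, with minor perturbations of the payoff values, form the basis of the APX-hardness claim announced later in the paper.
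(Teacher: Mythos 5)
Your high-level plan (reduce from exact cover) matches the paper, but your gadget is the transpose of the one the paper actually uses, and the part you defer as ``fine-tuning $R$ and $C$'' is the entire content of the proof --- and there are structural reasons to doubt your sequential design can be instantiated. First, all $k$ column players share one payoff matrix $C$, so ``one column player per element'' has no intrinsic meaning: identities must be encoded in the strategies themselves, and players can then swap roles. Second, the cost $T_{r(t)}(\mathcal{C}(t+1))$ is zero only if \emph{every} column player's new strategy is a best response to $r(t)$; hence every partially-covered intermediate profile in your scheme must consist entirely of best responses to the current set strategy. That forces ``covered'' and ``uncovered'' strategies to be payoff-tied against every set strategy the row player might occupy, which in turn lets any column player flip to ``covered'' at zero cost regardless of whether its element lies in the currently visited set. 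Third, the row player's payoff is linear in the multiset of column strategies, so at the end it cannot distinguish ``all players covered via pairwise-disjoint sets'' from ``all players covered arbitrarily'' unless the covering history is recorded in the column strategies --- at which point you are rebuilding the paper's encoding. The paper avoids all of this by letting the $k=s$ column players each adopt a strategy naming a subset $X_i$ in a \emph{single} round, and letting the row player's element-indexed strategies read off, via the linear payoff, whether any element is covered more than once; the target row strategy is a zero-cost best response precisely when the $s$ chosen subsets form an exact cover. Until you exhibit concrete matrices that close the tie-versus-strictness loopholes you yourself flag as the main obstacle, the hardness direction is not established.

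A smaller remark: your concern about certifying NP membership with a polynomially long path is legitimate (the paper glosses over it), but ``short-circuit repeated profiles'' alone bounds the length only by the number of distinct profiles, which is $m\cdot n^{k}$ and hence exponential in $k$. A cleaner route is the argument the paper gives later (its bound of $2m-1$ on the length of an optimal transformation path via alternating paths with no repeated row strategies), which applies here as well.
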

	\begin{proof}
		Given a transition path from the initial equilibrium to the target equilibrium, it is easy to verify whether it is a valid path and its cost is 0. For NP-hardness, we reduce from the EXACT COVER problem, which is known to be NP-complete \cite{DBLP:conf/coco/Karp72}, to the decision problem TRANSITION $(k,m,n,0)$. Recall the EXACT COVER problem:
		
		\noindent {\bf Problem 3:} EXACT COVER $(s,w)$.\\
		{\bf Input:} A finite set $Z=\{1,2,\dots,3s\}$ and a collection $X = \{X_1,X_2,\dots,X_w\}$ of 3-element subsets of the set $Z$, where $s \le w$. \\
		{\bf Output:} YES, if there exists a collection $\{X_{i_1},X_{i_2},\dots, X_{i_s} \} \subset X$ such that their union is $Z$, and NO otherwise.
		
		Given an instance of EXACT COVER, we construct an instance of TRANSITION $(k,m,n,0)$ as follows:
		
		Let $k=s$. That is, there are $s$ column players. Let $m=3s+2$ and $n=w+2$. We design the column players' payoff matrix $C_{(3s + 2) \times (w + 2)}$ as below. Except for the last row and the last column, all elements in matrix $C$ are $1$. The intersection of the last row and last column is set to be $1$ as well. Lastly, all the remaining elements are $0$. 
		
		{
			\begin{table}[ht]
				\centering
	        \setlength{\extrarowheight}{2pt}
          \hskip-0.75cm
				\begin{tabular}{cc|c|c|c|c|}
					& \multicolumn{1}{c}{} & \multicolumn{3}{c}{Column Players' Payoff Matrix $C$}\\
					& \multicolumn{1}{c}{} & \multicolumn{1}{c}{$1$}  & \multicolumn{1}{c}{$\cdots$} & \multicolumn{1}{c}{$w + 1$} & \multicolumn{1}{c}{$w + 2$} \\\cline{3-6}
					\multirow{3}*{}  & $1$ & $1$ & $\cdots$ & $1$ & $0$\\\cline{3-6}
					& $\vdots$ & $\vdots$ & $\vdots$ & $\vdots$ & $\vdots$\\\cline{3-6}
					& $3s + 1$ & $1$ & $\cdots$ & $1$ & $0$\\\cline{3-6}
					& $3s + 2$ & $0$ & $\cdots$ & $0$ & $1$\\\cline{3-6}
				\end{tabular}
			\end{table}
		}
		Let $v_1,v_2,\dots,v_w$ denote the characteristic vectors of $ X_1,X_2,\dots,X_w$. That is, for a vector $v_i=(v_{i,j})_{j=1,\dots,3s}$, its elements $v_{i,j}=1$ if $j\in X_{i}$ and $v_{i,j}=0$ if $j\notin X_{i}$, $i=1,\dots,w$. We construct the row player's payoff matrix $R_{(3s + 2) \times (w + 2)}$ as follows. The entries in the first row are $0$ except the last one being $-1$. The entries in the first column are $0$ except the last one being $-s$. Then, for the last row, $R_{(3s+2),j}=1/s, j=2,\dots,w+1$ and $R_{(3s+2),(w+2)}=1$. For the last column, $R_{i,(w+2)}=0, i=2,\dots,3s+1$. The other entries of matrix $R$ are filled by elements $v_{i,j}$ as shown below.

		{  
			\begin{table}[ht]

                \centering
				\setlength{\extrarowheight}{2pt}
  \hskip-0.75cm
				\begin{tabular}{cc|c|c|c|c|c|}
					& \multicolumn{1}{c}{} & \multicolumn{3}{c}{Row Player's Payoff Matrix $R$}\\
					& \multicolumn{1}{c}{} & \multicolumn{1}{c}{$1$}  & \multicolumn{1}{c}{$2$} & \multicolumn{1}{c}{$\cdots$} & \multicolumn{1}{c}{$w + 1$} & \multicolumn{1}{c}{$w + 2$} \\\cline{3-7}
					\multirow{3}*{}  & $1$ & $0$ & $0$ & $\cdots$ & $0$ & $-1$\\\cline{3-7}
					& $2$ & $0$ & $v_{1,1}$ & $\cdots$ & $v_{w,1}$ & $0$\\\cline{3-7}
					& $\vdots$ & $\vdots$ & $\vdots$ & $\vdots$ & $\vdots$ & $\vdots$\\\cline{3-7}
					& $3s + 1$ & $0$ & $v_{1,3s}$ & $\cdots$ & $v_{w,3s}$ & $0$\\\cline{3-7}
					& $3s + 2$ & $-s$ & $1/s$ & $\cdots$ & $1/s$ & $1$\\\cline{3-7}
				\end{tabular}
			\end{table}
 
		}

		We note that there are at least two pure Nash equilibria in the $(k+1)$-player game. They are $(r^1, c^1,\dots,c^1)$ and $(r^{3s+2},c^{w+2},\dots,c^{w+2})$; namely, all players choose their first strategy and all players choose their last strategy, respectively. In particular, let the initial Nash equilibrium $(r(1),\mathcal{C}(1))$ be $(r^1, c^1,\dots,c^1)$ and the target equilibrium $(r^{*}, \mathcal{C}^{*})$ be $(r^{3s+2},c^{w+2},\dots,c^{w+2})$. Till now, we have constructed an instance of TRANSITION $(k=s,m=3s+2,n=w+2,0)$. 
		
		\noindent {\bf Reduction correctness.} Given that there is a solution to TRANSITION $(s,3s+2,w+2,0)$, we can compute a solution to EXACT COVER $(s,w)$. To this end, we disclose the features of a zero-cost transformation path of TRANSITION $(s,3s+2,w+2,0)$.
		
		First, we observe from payoff matrix $C$, that a column player will not choose to play its last strategy $c^{w+2}$ unless the row player has chosen its last strategy $r^{3s+2}$. Based on this, we observe from payoff matrix $R$, that the row player will not choose to play its last strategy $r^{3s+2}$ as long as there is a single column player playing its first strategy $c^{1}$. This is because, in that case, the row player's payoff is at most $-s + \frac{s-1}{s}$ which is less than 0. In view of these observations, we conclude that the $s$ column players must be playing some strategies within the set $\{c^{2},\dots,c^{w+1}\}$ on a zero-cost transformation path from the initial equilibrium $(r^1, c^1,\dots,c^1)$ to the target equilibrium $(r^{3s+2},c^{w+2},\dots,c^{w+2})$. 
		
		Second, for the second-last step on the transition path, we notice that these $s$ column players must choose $s$ different strategies amongst the set $\{c^{2},\dots,c^{w+1}\}$. Otherwise, suppose that two column players are choosing the same strategy, for example, $c^{2}$. Note that the characteristic vector $v_1 = (v_{1,j})_{j=1,\dots,3s}$ has exactly three elements whose value are 1. Without loss of generality, assume $v_{1,1}=1$. Then, by playing strategy $r^2$, the row player's payoff is 2, which is greater than its payoff 1 by playing strategy $r^{3s+2}$. Therefore, it contradicts the existence of a zero-cost transformation path on which the row player will transit to playing the last strategy.
		
		Last, we notice that in these $s$ columns of payoff matrix $R$ that correspond to the $s$ different strategies played by these column players, there should be only one 1-element in each row. Otherwise, by playing the strategy that corresponds to a row that has multiple 1-elements, the row player has a higher payoff and will not transit to playing the last strategy without a positive reward. The same contradiction occurs. So, the row player is indifferent between playing the last strategy $r^{3s+2}$ and any one of the strategies in $\{r^{2},\dots,r^{3s+1}\}$, as its payoff is 1 in either case. Together with the fact that there are $3s$ rows (namely, row 2 to row $3s+1$) and each of these $s$ columns has three 1-elements, we conclude that these $s$ columns correspond to $s$ characteristic vectors of 3-elements sets such that their union is the set $Z=\{1,2,\dots,3s\}$.
		
		To conclude, given a zero-cost transition path from the initial equilibrium to the target equilibrium, we can identify a set of values $v_{i,j}$ such that their corresponding 3-elements subsets $\{X_{i_1},X_{i_2},\dots,X_{i_s}\}$ is a solution to the SET COVER problem. Also, if the SET COVER problem has a solution $\{X_{i_1},\dots,X_{i_s}\}$ such that their union is $Z$, then we can construct the transition path as 
		$(r^1,c^1,\dots,c^1)\rightarrow (r^1,C^{i_1},C^{i_2},\dots,C^{i_s})\rightarrow (r^{3s+2},C^{i_1},C^{i_2},\dots,C^{i_s})\rightarrow (r^{3s+2},C^{w+2},C^{w+2},\dots,C^{w+2})$. It is easy to verify that the cost of this transition path is zero. 
		
	Hence, TRANSITION $(k,m,n,0)$ is NP-complete.
	\end{proof}
	
	This result immediately implies that the problem OPT TRANSITION $(k,m,n)$ is APX-hard.
	
	\begin{corollary}
		Computing the optimal transformation path is APX-Hard and no multiplicative approximation is possible. That is, constructing a transformation path of cost $\alpha \cdot \OPT$ is NP-Hard for any $\alpha > 1$, where $\OPT$ is the cost of the optimal transformation path.
	\end{corollary}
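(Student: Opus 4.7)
The plan is to derive the corollary as an immediate consequence of the NP-completeness of TRANSITION$(k,m,n,0)$ established in the preceding theorem. The key observation is that a multiplicative approximation algorithm is uniquely constrained on instances whose optimum cost equals zero: if $\OPT=0$, then any feasible output of an $\alpha$-approximation must also equal $0$, since the reward scheme's cost is nonnegative and is bounded by $\alpha\cdot\OPT=0$. Conversely, if $\OPT>0$, the approximation algorithm necessarily returns a strictly positive value, as it outputs the cost of some actual transformation path whose cost is at least $\OPT$.

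With this observation in hand, I would argue by contradiction: suppose a polynomial-time algorithm $\mathcal{A}$ achieved an $\alpha$-approximation for OPT TRANSITION$(k,m,n)$ for some fixed $\alpha>1$. Given any instance of TRANSITION$(k,m,n,0)$ as built in the theorem, I would run $\mathcal{A}$ on the same payoff matrices and initial/target equilibria, and answer YES if and only if $\mathcal{A}$ returns a transformation path of cost exactly $0$. By the dichotomy above, this procedure decides TRANSITION$(k,m,n,0)$ in polynomial time, contradicting the NP-completeness result of the theorem. This simultaneously establishes APX-hardness and the stronger statement that no multiplicative approximation is possible for any $\alpha>1$.

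There is no real obstacle here; the only subtlety to flag is the implicit assumption that a multiplicative approximation algorithm outputs an actual feasible transformation path (so the reported cost equals the true cost of some path, hence is $0$ precisely when $\OPT=0$). This is standard, and the argument goes through verbatim as long as one is content with the usual convention for multiplicative approximation on minimization problems with nonnegative objective values.
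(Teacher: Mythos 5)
Your proposal is correct and is exactly the intended argument: the paper states the corollary as an immediate consequence of the NP-completeness of TRANSITION$(k,m,n,0)$ without spelling out the details, and your gap argument (an $\alpha$-approximation must output cost $0$ precisely when $\OPT=0$, hence would decide the zero-cost instances in polynomial time) is the standard justification the authors have in mind.
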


	\subsection{When one of the variables, $k$, $m$, $n$, is a constant}
    Since we have shown that TRANSITION $(k,m,n,T)$ is APX-hard, we then consider special cases when one of the variables, $k$, $m$, $n$, is a constant. 
    
    {\bf When $m=2$.} The following theorem shows that the problem TRANSITION $(k,m,n,T)$ is hard to solve even if the row player has only two strategies. 
	
	\begin{theorem}\label{TRANSITION}
		TRANSITION $(k,2,n,T)$ is NP-complete.
	\end{theorem}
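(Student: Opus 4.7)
Membership in NP is immediate: a candidate transformation path whose length is polynomial in the input can be validated round by round. For NP-hardness, the plan is to reduce from a variant of the 0/1 Knapsack problem, as hinted in the introduction. The guiding intuition is that, with only two row strategies, whenever the row player is induced to switch from $r^1$ to $r^2$, the reward required is a linear combination of the column players' current payoff entries, which is precisely the sort of subset-sum expression that encodes Knapsack.

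First I would fix an instance of Knapsack with items of weights $w_1,\ldots,w_k$, values $v_1,\ldots,v_k$, weight budget $W$, and target value $V$. I would then construct a game with $k$ column players (one per item) and $n$ column strategies consisting of $c^1$ (initial), $c^n$ (target), and $n-2$ intermediate strategies that encode ``item $i$ in the knapsack'' versus ``item $i$ out''. The column-player payoff matrix $C$ would be arranged so that, while the row player still plays $r^1$, each column player can freely move from $c^1$ into either intermediate strategy at zero reward, and once the row player has reached $r^2$, every column player can freely move from any intermediate strategy to $c^n$. The row-player matrix $R$ would be calibrated so that, when the column profile is $(c^{j_1},\ldots,c^{j_k})$, the reward needed to make the row player switch from $r^1$ to $r^2$ in that round equals $V - \sum_{i\,\in\,\text{packed}} v_i$ plus a large penalty if $\sum_{i\,\in\,\text{packed}} w_i > W$.

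With this construction the ``easy'' direction is direct: a Knapsack solution of value at least $V$ and weight at most $W$ yields the canonical three-phase path (column players move from $c^1$ to the matching intermediate profile; row player switches $r^1 \to r^2$; column players move on to $c^n$), whose total cost is at most the designed threshold $T$. The main obstacle, and where I would spend most of the effort, is the reverse direction: any transformation path of cost $\le T$ must correspond to a valid knapsack solution. This rules out ``exotic'' paths in which the row player oscillates between $r^1$ and $r^2$ more than once, or column players revisit strategies. I would handle this by inserting sufficiently large penalty entries for every non-intended transition: each extra row switch would cost strictly more than the entire budget $T$, as would any column player detour through a second intermediate strategy. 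Once these padding arguments are in place, any cheap path collapses to the canonical structure, its cost splits into the three phases above, and the condition ``cost $\le T$'' becomes exactly the feasibility of the Knapsack instance, completing the reduction.
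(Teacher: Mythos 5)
Your high-level plan (reduce from Knapsack, set up a canonical three-phase path, and use large penalties to rule out exotic paths) matches the spirit of the paper's proof, but two concrete obstacles in your construction are not addressed and would make the reduction fail as described. First, in this model all $k$ column players share the \emph{same} payoff matrix $C$ and strategy set, so ``one column player per item'' with intermediate strategies meaning ``item $i$ in/out'' cannot be enforced: the players are interchangeable, nothing stops two players from both selecting ``item $1$ in'' or no player from deciding about item $3$, and the resulting profile is a size-$k$ multiset of strategies rather than a $0/1$ selection of items. The paper embraces this: it uses one column \emph{strategy} per item, treats the $k$ (actually $k+1$) column players as slots that each pack one item, and consequently must reduce from a cardinality-constrained Knapsack variant (exactly $k$ items, as a multiset), whose NP-hardness it establishes separately by a padding argument. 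Your proposal needs an analogous step, or some other mechanism to break the symmetry among column players.

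Second, you ask the row player's switching reward to equal $V - \sum_{i\in\text{packed}} v_i$ \emph{plus} a large penalty when $\sum_{i\in\text{packed}} w_i > W$. With only two row strategies that reward is
$\max\bigl(\sum_j R(r^1,c_j),\ \sum_j R(r^2,c_j)\bigr)-\sum_j R(r^2,c_j)$,
a single piecewise-linear function of the strategy counts; it cannot simultaneously encode the value shortfall and a conditional weight penalty. The two constraints have to be separated across the two incentive channels: the paper puts the weights $-w_i$ into the row player's payoff under $r^2$ (so the row switch is free iff $\sum w_{i_j}\le W$, using a designated strategy $c^{m+1}$ carrying $+W$) and puts the values $\epsilon v_i$ into the column players' payoffs (so the cost of moving the column players stays within the budget $T$ iff $\sum v_{i_j}\ge V$), with the scale $\epsilon$ chosen small enough that the two channels do not interfere. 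Until you specify such a separation, the ``cost $\le T$ iff Knapsack feasible'' equivalence at the end of your argument does not go through.
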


\begin{proof}
	Membership in NP follows trivially from the problem definition. 
	We show the NP-hardness by reducing from a variant of the Knapsack problem, that has an exact number of item constraints, to TRANSITION $(k,2,n,T)$. For completeness, we include proof that this variant of the Knapsack problem is NP-hard in Lemma \ref{KnapsackNPHard}.
	
	\noindent {\bf Problem 4:} Knapsack $(m,k,W,V)$.\\
	{\bf Input:} A set of $m$ items, each with a weight  $w_{i}$ and a value $v_{i}$, where $i=1,\dots,m$. A maximum weight capacity $W$, where $W \ge w_i \ge 0, \forall \,\ i$. A target value $V$, where $V \ge v_i \ge 0,  \forall \,\ i$. A constant integer $k$. \\
	{\bf Output:} YES, if there exists a multiset $I =\{i_1,\dots,i_k\} , i_j \in [m]$, such that
	$\sum_{j = 1}^k v_{i_j} \geq V$ and $\sum_{j = 1}^k w_{i_j} \leq W $, and NO otherwise. 
	
	Note that, in this variant of the Knapsack problem, a valid solution must contain exactly $k$ items in the knapsack.
	
	Given an instance of Knapsack $(m,k,W,V)$, we construct an instance of TRANSITION $(k + 1,2,m + 3,4 \epsilon k V)$ in which there are $k+1$ column players, each has $m+3$ strategies, and $\epsilon$ is sufficiently small such that $\epsilon \ll \frac{w_i}{kV}, \forall i$. Specifically, the row player and the column players' payoff matrices are as follows.
	
	
	{
		\begin{table}[h]
			\centering
			\setlength{\extrarowheight}{2pt}
			\begin{tabular}{cc|c|c|c|c|c|c|}
				& \multicolumn{1}{c}{} & \multicolumn{5}{c}{Column Players' Payoff Matrix $C$}\\
				& \multicolumn{1}{c}{} & \multicolumn{1}{c}{$0$}  & \multicolumn{1}{c}{$1$} & \multicolumn{1}{c}{$\cdots$} & \multicolumn{1}{c}{$m$} & \multicolumn{1}{c}{$m + 1$} & \multicolumn{1}{c}{$m + 2$} \\\cline{3-8}
				\multirow{3}*{}  & $1$ & $ \epsilon V$ & $\epsilon v_1$ & $\cdots$ & $\epsilon v_m$ & $-k  \epsilon V$ & $-\infty$\\\cline{3-8}
				& $2$ & $0$ & $0$ & $\cdots$ & $0$ & $0$ & $1$ \\\cline{3-8}
			\end{tabular}
		\end{table}
	}
	
	{
		\begin{table}[h]
			\centering
			\setlength{\extrarowheight}{2pt}
			\begin{tabular}{cc|c|c|c|c|c|c|}
				& \multicolumn{1}{c}{} & \multicolumn{5}{c}{Row Player's Payoff Matrix $R$}\\
				& \multicolumn{1}{c}{} & \multicolumn{1}{c}{$0$}  & \multicolumn{1}{c}{$1$} & \multicolumn{1}{c}{$\cdots$} & \multicolumn{1}{c}{$m$} & \multicolumn{1}{c}{$m + 1$} & \multicolumn{1}{c}{$m + 2$} \\\cline{3-8}
				\multirow{3}*{}  & $1$ & $0$ & $0$ & $\cdots$ & $0$ & $0$ & $-\infty$\\\cline{3-8}
				& $2$ & $-\infty$ & $-w_1$ & $\cdots$ & $-w_m$ & $W$ & $2W$ \\\cline{3-8}
			\end{tabular}
		\end{table}
	}
	
	We note that there are at least two pure Nash equilibria in this $(k+2)$-player game. They are $(r^1, c^0,\dots,c^0)$ and $(r^{2},c^{m+2},\dots,c^{m+2})$; namely, all players choose their first strategy and all players choose their last strategy, respectively. In particular, let the initial Nash equilibrium $(r(1),\mathcal{C}(1))$ be $(r^1, c^0,\dots,c^0)$ and the target equilibrium $(r^{*}, \mathcal{C}^{*})$ be $(r^{2},c^{m+2},\dots,c^{m+2})$. Till now, we have constructed an instance of the problem TRANSITION $(k + 1,2,m + 3,4 \epsilon k V)$.
	
	\noindent {\bf Reduction correctness.} To show the correctness of the reduction, we disclose the features of a solution to TRANSITION $(k + 1,2,m + 3,4 \epsilon k V)$ that admits a transformation path whose cost is no more than $4 \epsilon k V$.
	
	First, note that a column player's payoff is negative infinity by playing $c^{m+2}$ when the row player is playing its first strategy $r^1$. So, none of the column players will switch to their last strategy $c^{m+2}$ with bounded reward, unless the row player has switched to its second strategy $r^2$. In other words, the last step of the transformation path must be $(r^2,\mathcal{C}(t)) \rightarrow (r^{2},c^{m+2},\dots,c^{m+2})$, where $\mathcal{C}(t)$ is the strategy profile of the column players in the second last round. Denote $I_{\mathcal{C}(t)}$ the set of indices of the strategy profile $\mathcal{C}(t)$. Then $I_{\mathcal{C}(t)}$ is a multiset of size $k + 1$ whose distinct elements belong to $\{0,1,\dots,m+1\}$. We also note that the row player's payoff is negative infinity when it plays $r^2$ even if only one of the column players is playing its first strategy $c^0$. So, $I_{\mathcal{C}(t)}$ is a multiset of size $k + 1$ whose distinct elements belong to $\{1,\dots,m+1\}$. Therefore, in this last round $(r^2,\mathcal{C}(t)) \rightarrow (r^{2},c^{m+2},\dots,c^{m+2})$, no positive reward is needed for the players as their payoffs are all increased. 
	
	Second, denote the round before $(r^2,\mathcal{C}(t))$ by $(r^1,\mathcal{C}(t-1))$, which is evolved from the initial equilibrium $(r^1, c^0,\dots,c^0)$. That is, $(r^1, c^0,\dots,c^0) \rightarrow (r^1,\mathcal{C}(t-1)) \rightarrow (r^2,\mathcal{C}(t))$, in which $\mathcal{C}(t-1)$ is similar to $\mathcal{C}(t)$ in the sense that it corresponds to a multiset of size $k + 1$, consisting of the index of the strategies chosen by column players. We note that in either round, no positive reward is needed for the row player. This is obvious for the first round since the row player's payoff remains $0$. In the second-last round $(r^1,\mathcal{C}(t-1)) \rightarrow (r^2,\mathcal{C}(t))$, we can further conclude that there exists at least one column player who plays strategy $c^{m+1}$, since otherwise, it costs at least a $w_i$ reward to incentivize the row player choosing strategy $r^2$, which contradicts with the scale of the total cost $T=4 \epsilon k V$ (by the choice of $\epsilon \ll \frac{w_i}{kV}, \forall i$). Denote the columns in $C(t)$ as $i_1,\dots,i_k$. Because the row player's payoff after shifting the strategy is not smaller than its payoff before that, we have the inequality $\sum_{j = 1}^k w_{i_j} \leq W $. In addition, we note that the costs needed to incentivize the column players switching from $(r^1, c^0,\dots,c^0)$ to $(r^1,\mathcal{C}(t-1))$ and from $(r^1,\mathcal{C}(t-1))$ to $(r^2,\mathcal{C}(t))$ are the same, owing to the fact that the row player switched from the same strategy $r^1$ and the structure of $\mathcal{C}(t-1)$ and $\mathcal{C}(t)$. Hence, in either step, the cost needed to incentivize the column players is $2\epsilon k V$.
	
	Third, to ensure that the cost needed to incentivize the column players switching from $(r^1, c^0,\dots,c^0)$ to $(r^1,\mathcal{C}(t-1))$ is at most $2\epsilon k V$, there must be only one column player who plays the $(m+1)$-th strategy $c^{m+1}$. This is because, only in this case, the cost is 
	\begin{align*}
		& \epsilon \Big(\sum_{{j} = 1}^k (V - v_{i_j}) + (V + kV) \Big) \\
		= &  \epsilon \Big( 2kV + (V - \sum_{j = 1}^k v_{i_j}) \Big) \\
		\leq &  2k \epsilon V ,
	\end{align*}
	where the last inequality is owing to $\sum_{j = 1}^k v_{i_j} \ge V$.
	
	Therefore, a solution to TRANSITION $(k + 1,2,m + 3,4 k V \epsilon)$ implies a solution to  Knapsack $(m,k,W,V)$. Also, if the problem $Knapsack(m,k,W,V)$ has a solution $I=\{i_1,\dots,i_k\}$, then the transition path $(r^1,C^0,\dots,C^0)\rightarrow 
	(r^1,C^{i_1},C^{i_2},\dots,C^{i_k},C^{m+1})\rightarrow 
	(r^2,C^{i_1},C^{i_2},\dots,C^{i_k},C^{m+1})\rightarrow
	(r^2,C^{m+2},\dots,C^{m+2})$ has a cost of zero.
	
	Therefore, $TRANSITION(k,2,n,T)$ is NP-complete.
	
\end{proof}

\begin{lemma}\label{KnapsackNPHard}
	Problem 4, Knapsack $(m,k,W,V)$, is NP-hard.
\end{lemma}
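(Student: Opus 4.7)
The plan is to reduce the classical Subset Sum problem, known to be NP-hard, to Knapsack$(m,k,W,V)$. Given a Subset Sum instance $a_1,\dots,a_n\in\mathbb{Z}_{>0}$ with target $T$ (WLOG $a_i\le T$), I set a large base $M:=2n\sum_{i=1}^n a_i+3$ and construct, for each $i\in[n]$, a pair of twin items: an include-item $A_i$ with weight and value both equal to $M^i+a_i$, and an exclude-item $B_i$ with weight and value both equal to $M^i$. Thus $m=2n$. Set $k:=n$ and $W:=V:=T+\sum_{i=1}^n M^i$; a quick check confirms $W\ge w_i$ and $V\ge v_i$ for every constructed item, so this is a valid Knapsack$(2n,n,W,V)$ input.

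For the easy direction, any subset $S\subseteq[n]$ with $\sum_{i\in S}a_i=T$ corresponds to the multiset that picks $A_i$ for $i\in S$ and $B_i$ for $i\notin S$; it has exactly $k=n$ items, total weight $\sum_i M^i+T=W$, and total value $V$, hence is feasible.

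For the hard direction, suppose the constructed Knapsack instance admits a feasible multiset of size $n$. Let $e_i$ be the combined multiplicity of $A_i$ and $B_i$ in the multiset, and let $c_i\le e_i$ be the number of $A_i$ copies. Writing $\delta_i:=e_i-1$, the cardinality constraint gives $\sum_i\delta_i=0$ with $\delta_i\ge-1$, while weight and value feasibility together with $0\le\sum_i c_i a_i\le n\sum_j a_j$ squeeze $\sum_i\delta_i M^i$ into an interval of length $O(n\sum_j a_j)$ that, by the choice of $M$, is strictly contained in $(-M/2,M/2)$. The key claim is that this forces every $\delta_i=0$. Let $j^\star$ be maximal with $\delta_{j^\star}\ne0$. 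If $\delta_{j^\star}\ge 1$, the leading term dominates: $\sum_i\delta_i M^i\ge M^{j^\star}-\sum_{i<j^\star}M^i>M^{j^\star}(M-2)/(M-1)\ge M/2$, contradicting the upper bound; a symmetric estimate, this time bounding the lower-order tail from above by $nM^{j^\star-1}$ via $\sum_i\delta_i^+\le n$, rules out $\delta_{j^\star}=-1$. Once every $e_i=1$, weight and value collapse to $\sum_i c_i a_i=T$ with $c_i\in\{0,1\}$, so $\{i:c_i=1\}$ is a Subset Sum solution.

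The main obstacle is the base-$M$ uniqueness argument in the hard direction: I must balance the dominant term $\delta_{j^\star}M^{j^\star}$ against the worst-case lower-order contribution in both sign cases, using the constraints $\delta_i\ge-1$ and $\sum_i\delta_i=0$, and pick $M$ large enough that the weight and value bounds separate any nonzero perturbation from the zero one. The reduction is polynomial because $M^i$ for $i\le n$ has bit-size $O(n\log(n\sum_j a_j))$, polynomial in the Subset Sum input.
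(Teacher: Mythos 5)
Your reduction is correct, but it takes a genuinely different route from the paper. The paper reduces from the classical $0/1$ Knapsack problem by uniformly shifting every value by $m(V+1)$ and every weight by $W+1$, so that meeting the value target forces at least $k$ items while respecting the weight budget forces at most $k$; the cardinality constraint then comes for free and the residual constraints are exactly the original Knapsack constraints. You instead reduce from Subset Sum via a positional base-$M$ encoding with twin include/exclude items, and your uniqueness-of-representation argument (the case analysis on the leading nonzero $\delta_{j^\star}$, using $\delta_i\ge-1$ and $\sum_i\delta_i^{+}\le n$) is sound; setting weight equal to value for every item so that $W=V$ pins the total to exactly $T+\sum_i M^i$ is a clean trick. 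The paper's argument is shorter, but your construction has a concrete advantage: Problem~4 asks for a \emph{multiset}, i.e., items may be repeated, and your base-$M$ argument explicitly forces each pair to be used exactly once, so repetition is genuinely ruled out. The paper's shift only controls the total item count, leaving the residual problem as a knapsack with repetition rather than the $0/1$ problem it reduces from, and it also implicitly needs to try all values of $k$ since classical Knapsack does not fix the cardinality; your reduction avoids both issues.
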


\begin{proof}
	For the completeness of the paper, we include a proof that the variant of Knapsack appearing in Theorem \ref{TRANSITION} is NP-hard. We reduce the classical Knapsack problem, which is known to be NP-complete, to Knapsack $(m,k,W,V)$. To this end, we adapt the weight and value of the items, as well as the target value and weight in the classical Knapsack problem, so that a valid solution must contain exact $k$ items. 
	
	For each item $i$ in the classical Knapsack problem, $i=1,\dots,m$, add $m(V+1)$ to its value $v_i$ so that it becomes $v'_i = v_i + m(V+1)$, and add $W+1$ to its weight $w_i$ so that it becomes $w'_i = w_i + (W+1)$. In addition, let $V' = V + k m (V+1)$ and $W' = W + k (W+1)$ be the target value and target weight in Knapsack $(m,k,W,V)$. Since $V \ge v_i, \forall i$, it is clear that in order to have $\sum_i v'_i \ge V'$, the solution needs to include at least $k$ items in the Knapsack. Similarly, since $W \ge w_i, \forall i$, it is clear that in order to have $\sum_i w'_i \le W'$, the solution needs to include at most $k$ items in the Knapsack. To conclude, the solution to Knapsack $(m,k,W,V)$ must have exact $k$ items. 
\end{proof}
     The NP-hardness of the problem is shown by reducing from a variant of the Knapsack problem. 
	
	
    
	\subsection{When $k$, the number of column players, is a fixed constant.} We show that the problem OPT TRANSITION $(k,m,n)$ is polynomial-time solvable. This is done by reducing the problem of finding an optimal transformation path to the problem of finding the shortest path in a complete directed graph $G(V, E)$.
	
	Given an instance of the problem OPT TRANSITION $(k,m,n)$, the vertex $v \in V$ of graph $G$ is a strategy profile $(r^{v}, \mathcal{C}^{v})$ of the bi-matrix games, where $r^{v} \in \mathscr{R}$ and $\mathcal{C}^{v} = (c_1^{v}, \dots, c_k^{v}), c_i^{v}\in \mathscr{C}$. For a directed edge $e_{vu} \in E$ from vertex $v$ to vertex $u$, its weight $w_{vu}$ is defined to be the reward needed to incentivize the players moving from strategy profile $(r^{v}, \mathcal{C}^{v})$ to $(r^{u}, \mathcal{C}^{u})$, which is $T_{\mathcal{C}^{v}}(r^{u})+T_{r^{v}}(\mathcal{C}^{u})$. 
	
	Let $v^1$ and $v^{*}$ be the vertices corresponding to the initial equilibrium $(r(1),\mathcal{C}(1))$ and the target equilibrium $(r^{*}, \mathcal{C}^{*})$, respectively. The shortest path from $v^{1}$ to $v^{*}$ then corresponds to the optimal transformation path. Since $G$ is a directed graph with non-negative edge weights, and the shortest path problem can be solved in $O(|E|+|V|\log\log|V|)$ time \cite{thorup1999undirected}, we have the following result.
	
	\begin{theorem}\label{thm: constant_row}
		The optimal reward scheme can be computed in time $O(m^2n^{2k})$. That is, the problem OPT TRANSITION $(k,m,n)$ is slicewise polynomial with respect to $k$.
	\end{theorem}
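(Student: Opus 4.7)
The plan is to formalize the graph reduction sketched in the paragraph preceding the theorem. First, I would build the directed graph $G=(V,E)$ by declaring a vertex for every strategy profile $(r^v,\mathcal{C}^v)$ with $r^v\in\mathscr{R}$ and $\mathcal{C}^v=(c_1^v,\dots,c_k^v)\in\mathscr{C}^k$, so $|V|=mn^k$. I would make $G$ complete on $V$, assigning to each directed edge $e_{vu}$ the weight
\[
w_{vu} \;=\; T_{\mathcal{C}^v}(r^u) \;+\; T_{r^v}(\mathcal{C}^u),
\]
so $|E|=m^2 n^{2k}$. Non-negativity of $w_{vu}$ is immediate from the definitions of $T_{\mathcal{C}}(\cdot)$ and $T_r(\cdot)$, each of which is a ``best-response gap'' and therefore $\ge 0$.

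Next I would establish the equivalence between transformation paths and walks in $G$. Any transformation path $(r(1),\mathcal{C}(1))\to(r(2),\mathcal{C}(2))\to\cdots\to(r^*,\mathcal{C}^*)$ traces out a walk from $v^1$ to $v^*$ in $G$ whose total weight equals the cost $\sum_t\{T_{r(t)}(\mathcal{C}(t+1))+T_{\mathcal{C}(t)}(r(t+1))\}$ by construction, and conversely any walk in $G$ from $v^1$ to $v^*$ yields a valid transformation path of the same cost. Since all edge weights are non-negative, a shortest walk contains no cycles (any cycle may be deleted without increasing the cost), so the minimum cost transformation path is realized by a simple path from $v^1$ to $v^*$, i.e., by the shortest path in $G$.

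Finally, I would analyze the running time. For each $\mathcal{C}\in\mathscr{C}^k$ I precompute $\max_{r^i}\sum_j R(r^i,c_j)$ in $O(mk)$ time, and symmetrically for each $r\in\mathscr{R}$ I precompute $\max_{c^i} C(r,c^i)$ in $O(n)$ time. After this $O(m k n^k + mn)$ preprocessing, each weight $w_{vu}$ is computable in $O(k)$, so all edge weights are produced in $O(k\cdot m^2 n^{2k})$ time. Invoking Thorup's single-source shortest-path algorithm in $O(|E|+|V|\log\log|V|)$ then yields the optimal transformation path in $O(m^2 n^{2k} + m n^k \log\log(mn^k))$ time overall, which for constant $k$ collapses to $O(m^2 n^{2k})$, exhibiting the claimed slicewise polynomial bound in $k$.

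The main subtlety is the cycle-removal step: I must justify that one may restrict attention to simple paths, even though a priori a transformation path could be arbitrarily long. This is handled by the non-negativity of the edge weights combined with the fact that the target equilibrium is an absorbing fixed point of any optimal solution. The remaining ingredients — correctness of the edge-weight formula and the running time of Thorup's algorithm — are routine once the graph is in place.
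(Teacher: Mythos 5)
Your proposal follows the same reduction as the paper: build the complete directed graph on all $mn^k$ strategy profiles with edge weights $T_{\mathcal{C}^v}(r^u)+T_{r^v}(\mathcal{C}^u)$ and run a shortest-path algorithm, giving $O(m^2n^{2k})$ time. Your write-up is in fact somewhat more careful than the paper's (explicit walk/path equivalence, non-negativity, and cycle removal), but the approach is essentially identical.
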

	
	\begin{proof}
		Since each vertex of $G$ corresponds to a strategy profile, the number of vertices $|V| = m\cdot n^k$. Given that $G$ is a complete graph, the number of edges $|E|$ is $\Theta(|V|^2) = \Theta(m^2 n^{2k})$. Thus, the shortest path can be computed in $O(m^2n^{2k})$ time. After getting the shortest path, we then construct the optimal transformation path by setting the strategy profile in round $t$ to be the $(r^{v_t}, \mathcal{C}^{v_t})$, where $v_t$ is the $t$-th vertex on the shortest path. The construction based on the shortest path takes $O(|V|)$ time. As a result, the theorem is proven.
	\end{proof}
	
	
   \subsection{ When $n$, the number of column player strategies, is a fixed constant.} We show that the problem is slicewise polynomial in this case. 
	
	\begin{theorem}\label{thm: const_column}
		When the number of a column player's strategies $n$ is a fixed constant, we can compute the optimal reward scheme in time $O(m^2k^{2n})$.
	\end{theorem}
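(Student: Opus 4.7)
The plan is to mirror the construction used for Theorem~\ref{thm: constant_row}, but to exploit the symmetry among the $k$ column players to shrink the vertex set of the auxiliary graph from $m\cdot n^{k}$ down to something polynomial in $k$ (with $n$ fixed as the exponent).

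First I would observe that the column players are interchangeable: they share the payoff matrix $C$, the row player's payoff $\sum_{i}R(r,c_{i})$ depends only on the multiset $\{c_{1},\dots,c_{k}\}$, and each column player's payoff $C(r,c_{i})$ depends only on the row player's move and its own move. Consequently, both transition costs $T_{\mathcal{C}(t)}(r(t+1))$ and $T_{r(t)}(\mathcal{C}(t+1))$ defined in the Preliminaries depend on $\mathcal{C}(t)$ and $\mathcal{C}(t+1)$ only through the count vector $\mathbf{n}=(n_{1},\dots,n_{n})$ with $n_{j}=|\{i:c_{i}=c^{j}\}|$ and $\sum_{j}n_{j}=k$. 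This allows me to collapse all strategy profiles that correspond to the same count vector into a single vertex.

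Next I would build a directed graph $G'=(V',E')$ whose vertices are pairs $(r,\mathbf{n})$ with $r\in\mathscr{R}$ and $\mathbf{n}$ a nonnegative integer vector of length $n$ summing to $k$, with edge weight $w_{vu}=T_{\mathbf{n}^{v}}(r^{u})+T_{r^{v}}(\mathbf{n}^{u})$, computed from the symmetric formulas above. The number of count vectors is $\binom{k+n-1}{n-1}=O(k^{n-1})$ for fixed $n$, so $|V'|=O(m\,k^{n-1})$ and $|E'|=O(m^{2}k^{2(n-1)})$, well within the claimed $O(m^{2}k^{2n})$ budget even after accounting for the polynomial cost of computing each edge weight. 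A shortest $v^{1}$-to-$v^{*}$ path in $G'$ can then be found by Dijkstra (or the algorithm of \cite{thorup1999undirected}) in time dominated by $|E'|$.

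The only real thing to verify, and the step I expect to require the most care, is the correspondence between paths in $G'$ and actual transformation paths in the original game. Any transformation path $(r(1),\mathcal{C}(1))\to\cdots\to(r^{*},\mathcal{C}^{*})$ projects to a walk in $G'$ of the same total cost by mapping $\mathcal{C}(t)$ to its count vector; conversely, given any walk $(r(1),\mathbf{n}(1))\to\cdots\to(r^{*},\mathbf{n}^{*})$ in $G'$, I would construct a matching transformation path by picking, at each step, any labeled profile $\mathcal{C}(t)$ realizing $\mathbf{n}(t)$, using the symmetry among column players to relabel freely so that the initial and target labeled profiles match the given ones. Since the transition costs are invariant under such relabelings, the two path weights coincide, so the shortest path in $G'$ yields an optimal reward scheme. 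Combining the vertex count, edge count, and shortest-path running time gives the stated $O(m^{2}k^{2n})$ bound.
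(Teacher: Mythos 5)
Your proposal is correct and takes essentially the same approach as the paper: both exploit the interchangeability of column players to collapse strategy profiles into count vectors, yielding $O\bigl(m\binom{n+k-1}{n-1}\bigr)$ vertices, and then reduce to a shortest-path computation on the resulting complete directed graph. Your write-up is marginally more careful (you explicitly verify the path correspondence and note the tighter $O(k^{n-1})$ count of multisets), but these are refinements of the same argument, not a different route.
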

	\begin{proof}
		Given the fact that the strategy space and utility of the $k$ column players are the same, the $k$ bimatrix games are identical. So, to incentivize the column players' transit to strategy profiles $\mathcal{C}^{j}$ and $\mathcal{C}^{j'}$ when the row player's strategy is $r^i$, respectively, the rewards $T_{r^{i}}(\mathcal{C}^{j})$ and $T_{r^{i}}(\mathcal{C}^{j'})$ are the same, as long as the number of column players choosing each strategy in $\mathscr{C}$ is the same under both $\mathcal{C}^{j}$ and $\mathcal{C}^{j'}$.  
		
		Therefore, we can merge the vertices of the graph $G$ that corresponds to $(r^{i}, \mathcal{C}^{j})$ and $(r^{i}, \mathcal{C}^{j'})$ to form a new vertex. Since there are $k$ column players and $n$ different strategies, there are ${n+k-1 \choose k-1}$ different ways for column players to choose strategies, which means the number of vertices $|V| = m{n+k-1 \choose k-1} = O(mk^{n})$ in the constructed graph. So, the number of edges $|E|$ is $\Theta(|V|^2) = \Theta(m^2 k^{2n})$. 
	\end{proof}


	\section{Approximation Results}\label{Sec: Approx}
	In light of the APX-Hardness of the problem OPT TRANSITION $(k,m,n)$, in this section, we strive to design efficient algorithms that can find a solution within an additive factor of the optimal reward $T$. To this end, we define \emph{alternating path} and use it to design an approximation algorithm.
	
	\subsection{Alternating Path}
	First, we define an alternating path as follows.
	\begin{definition}
		The vertices of an alternating path are either a row player strategy $r(t)$ or a column players' strategy profile $\mathcal{C}(t)$. These vertices alternatingly appear on an alternating path as time epoch $t$ varies. 
	\end{definition}
	The edges of an alternating path are directed and weighted. The weight of edge $( r(t), \mathcal{C}(t+1) )$ is $T_{r(t)}(\mathcal{C}(t+1))$. Namely, the weight is the reward needed to incentivize column players taking up strategy profile $\mathcal{C}(t+1)$ given that the row player's strategy is $r(t)$. Similarly, the weight of edge $( \mathcal{C}(t), r(t+1) )$ is $T_{\mathcal{C}(t)}(r(t+1))$. The cost, $Cost(P)$, of an alternating path $P$ is the sum of all its edges' weights. Then, given a transformation path from the initial equilibrium to the target equilibrium, we can derive two alternating paths with the same length as the transformation path as Figure \ref{fig:alternating-path} demonstrates. Notably, the total cost of these two alternating paths is equal to the cost, $T = \sum_t \{ T_{r(t)}(\mathcal{C}(t+1)) + T_{\mathcal{C}(t)}(r(t+1)) \}$, of the transformation path. As such, we derive the following lemma straightforwardly.

	\begin{figure}[ht]
		\centering
		\includegraphics[width=0.45\textwidth]{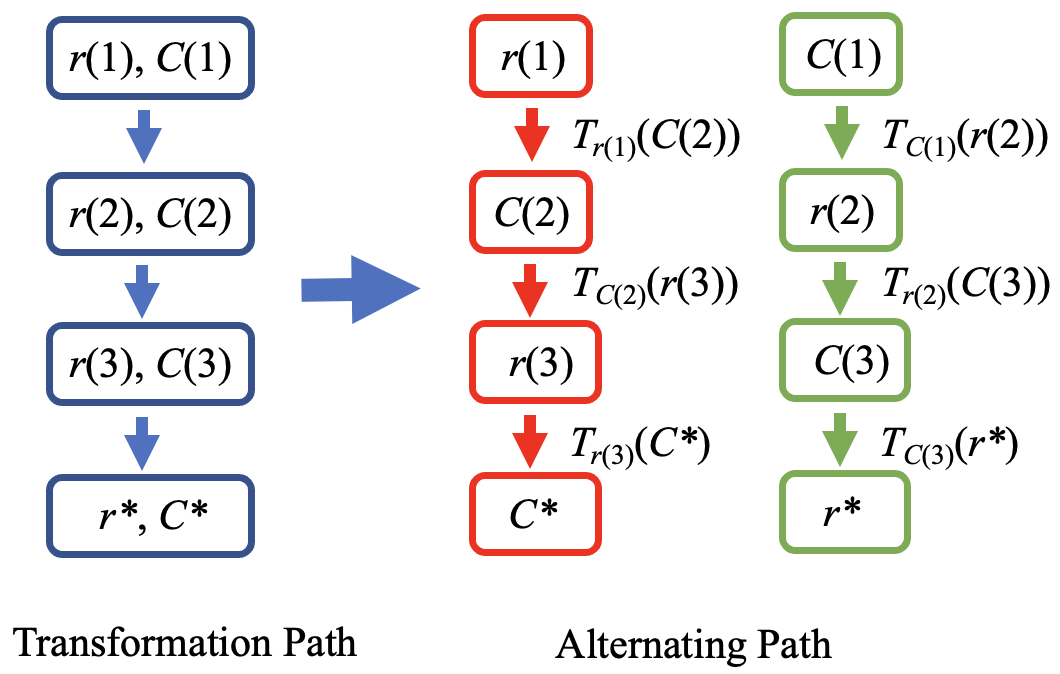}
		\caption{Two alternating paths decomposed from a transformation path.}
		\label{fig:alternating-path}
	\end{figure}

	\begin{lemma}\label{lemma: Lower_bound}
		Given the optimal transformation path from the initial equilibrium $(r(1),\mathcal{C}(1))$ to the target equilibrium $(r^{*}, \mathcal{C}^{*})$, and the two alternating paths decomposed from this transformation path, the cost of the transformation path is at least two times the cost of the alternating path whichever is smaller. 
	\end{lemma}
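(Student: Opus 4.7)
The plan is to prove the lemma by a direct decomposition-and-averaging argument. Looking at the cost formula $T = \sum_t \{T_{r(t)}(\mathcal{C}(t+1)) + T_{\mathcal{C}(t)}(r(t+1))\}$, each summand of $T$ corresponds to exactly one edge in the combined picture of the two alternating paths (as illustrated in Figure~\ref{fig:alternating-path}). My first step will be to make this decomposition precise: denote the two alternating paths by $P_1$ and $P_2$, where $P_1$ consists of edges of the form $(r(t), \mathcal{C}(t+1))$ with weight $T_{r(t)}(\mathcal{C}(t+1))$ together with edges of the form $(\mathcal{C}(t+1), r(t+2))$ with weight $T_{\mathcal{C}(t+1)}(r(t+2))$, and $P_2$ is the analogous path started from the other parity. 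I would carefully check that every term in the sum defining $T$ appears as the weight of exactly one edge in $P_1 \cup P_2$, and that no edge is counted twice.

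Having established this partition of edges, the identity $Cost(P_1) + Cost(P_2) = T$ follows immediately from the definition of $Cost(\cdot)$ as the sum of edge weights on an alternating path. Since all individual rewards $T_{r(t)}(\mathcal{C}(t+1))$ and $T_{\mathcal{C}(t)}(r(t+1))$ are non-negative by construction (they are differences between a best-response payoff and an actually chosen-strategy payoff), both $Cost(P_1)$ and $Cost(P_2)$ are non-negative. The averaging step then gives $\min\{Cost(P_1), Cost(P_2)\} \le (Cost(P_1)+Cost(P_2))/2 = T/2$, which rearranges to the claimed inequality $T \ge 2\min\{Cost(P_1), Cost(P_2)\}$.

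I do not anticipate a significant obstacle here; the only point that needs care is the bookkeeping in the decomposition step, particularly at the endpoints of the transformation path (making sure the initial vertex $\mathcal{C}(1)$ or $r(1)$ and the terminal vertex $r^*$ or $\mathcal{C}^*$ are assigned to the correct alternating path so that the partition of edges is exact). Once the bijection between transformation-path edges and alternating-path edges is verified, the remainder is essentially a one-line pigeonhole argument. No appeal to the optimality of the transformation path is actually required for the inequality itself — the statement holds for any transformation path — though the lemma is phrased for the optimal one because that is how it will be used downstream to lower-bound $\OPT$ by the cost of a minimum-weight alternating path.
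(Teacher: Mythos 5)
Your proposal is correct and matches the paper's (implicit) argument exactly: the paper also decomposes the transformation path into two alternating paths whose costs sum to $T$ and concludes the lemma "straightforwardly" by the same averaging observation. Your added remarks — that the endpoint bookkeeping is the only delicate point and that optimality of the transformation path is not actually needed for the inequality — are both accurate.
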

	\vspace{0pt}
	Now, fixing an initial equilibrium $(r(1),\mathcal{C}(1))$ and a target equilibrium $(r^{*}, \mathcal{C}^{*})$, let us consider all possible alternating paths connecting $r(1)$ or $\mathcal{C}(1)$ and $r^{*}$ or $\mathcal{C}^{*}$, respectively. Without loss of generality, assume the alternating path $r(1) \rightarrow \dots \rightarrow r(l) \rightarrow \mathcal{C}^{*}$ has the smallest cost amongst all these alternating paths. Then, we can use Algorithm~\ref{Alg: Construction} to construct a transformation path from $(r(1),\mathcal{C}(1))$ to $(r^{*}, \mathcal{C}^{*})$ whose cost is twice of the cost of the alternating path $r(1) \rightarrow \dots \rightarrow r(l) \rightarrow \mathcal{C}^{*}$.
	
	\begin{algorithm}[ht]
		\KwInput{An alternating path $P$ with cost $Cost(P)$: $r(1) \rightarrow \dots \rightarrow r(l) \rightarrow \mathcal{C}^{*}$}
		\KwOutput{A transformation path of length $l+1$ with cost $2 \cdot Cost(P)$}
		the 1st vertex $\gets (r(1),\mathcal{C}(1))$\;
		the 2nd vertex $\gets (r(1), \mathcal{C}(2))$\;  
		\While{$3\le t \le l$}
		{
			when $t$ is odd, the $t$-th vertex $\gets (r(t), \mathcal{C}(t-1))$;
			when $t$ is even, the $t$-th vertex $\gets (r(t-1), \mathcal{C}(t))$
		}
		the $(l+1)$-th vertex $\gets (r(l), \mathcal{C}^{*})$\;
		the $(l+2)$-th vertex $\gets (r^{*}, \mathcal{C}^{*})$
		\caption{Transformation Path Construction}\label{Alg: Construction}
	\end{algorithm}

 \setlength{\textfloatsep}{0pt}
	
	In the proof of the following theorem, we will present a constructive procedure for the optimal transformation path. 
	
	\begin{theorem}\label{Theorem:alternating-Path}
		The cost of the transformation path constructed by Algorithm~\ref{Alg: Construction} is two times the cost of the alternating path $r(1) \rightarrow \dots \rightarrow r(l) \rightarrow \mathcal{C}^{*}$. Moreover, the length of the output transformation path is one longer than the input alternating path.
	\end{theorem}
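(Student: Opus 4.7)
The plan is to trace the output of Algorithm~\ref{Alg: Construction} step by step, compute the reward paid in each of its transitions, and observe that every edge weight of the input alternating path $P$ is counted exactly twice in the total, while the two equilibrium conditions provide the vanishing terms needed at the two endpoints.

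First I would handle the length statement by direct counting. Because $P\colon r(1)\to\mathcal{C}(2)\to r(3)\to\cdots\to r(l)\to\mathcal{C}^{*}$ alternates and begins with a row vertex while ending with a column vertex, $l$ is odd and $P$ has $l$ edges (equivalently $l+1$ vertices). Algorithm~\ref{Alg: Construction} outputs vertices $V_1,\dots,V_{l+2}$, giving a transformation path of $l+1$ transitions; hence its length exceeds that of $P$ by exactly one.

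For the cost claim I would first write out the vertex sequence explicitly: $V_1=(r(1),\mathcal{C}(1))$, $V_2=(r(1),\mathcal{C}(2))$, $V_t=(r(t),\mathcal{C}(t-1))$ for odd $3\le t\le l$, $V_t=(r(t-1),\mathcal{C}(t))$ for even $4\le t\le l-1$, $V_{l+1}=(r(l),\mathcal{C}^{*})$, and $V_{l+2}=(r^{*},\mathcal{C}^{*})$. The key structural observation is that consecutive vertices differ in exactly one coordinate, and this coordinate alternates between the row and the column component. Applying the per-transition cost $T_{r(t)}(\mathcal{C}(t+1))+T_{\mathcal{C}(t)}(r(t+1))$, one sees, for instance, that $V_2\to V_3=(r(1),\mathcal{C}(2))\to(r(3),\mathcal{C}(2))$ costs $T_{\mathcal{C}(2)}(r(3))+T_{r(1)}(\mathcal{C}(2))$ and $V_3\to V_4=(r(3),\mathcal{C}(2))\to(r(3),\mathcal{C}(4))$ costs $T_{r(3)}(\mathcal{C}(4))+T_{\mathcal{C}(2)}(r(3))$; both are sums of two edge weights of $P$, and the shared weight $T_{\mathcal{C}(2)}(r(3))$ appears in both. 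This pattern persists throughout the path.

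Summing over all $l+1$ transitions, each interior edge weight of $P$ is counted in exactly two consecutive transitions. The two boundary transitions $V_1\to V_2$ and $V_{l+1}\to V_{l+2}$ additionally produce the terms $T_{\mathcal{C}(1)}(r(1))$ and $T_{\mathcal{C}^{*}}(r^{*})$, both of which vanish because $(r(1),\mathcal{C}(1))$ and $(r^{*},\mathcal{C}^{*})$ are Nash equilibria; these vanishings are precisely what allow the extremal weights $T_{r(1)}(\mathcal{C}(2))$ and $T_{r(l)}(\mathcal{C}^{*})$ to be counted twice rather than once. The total therefore collapses to $2\,Cost(P)$. The main obstacle is purely notational: one must align the parity of the ``odd/even $t$'' branches of the algorithm with the odd/even time indices of the vertices of $P$ and correctly identify, for each transition, which endpoint supplies the ``previous'' row strategy and which supplies the ``previous'' column profile in the reward formulae.
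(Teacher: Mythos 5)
Your proposal is correct and follows essentially the same route as the paper's proof: expand the per-round cost $T_{r(t)}(\mathcal{C}(t+1))+T_{\mathcal{C}(t)}(r(t+1))$ along the constructed path, observe that each edge weight of $P$ appears in exactly two consecutive transitions, and use the equilibrium conditions to kill the boundary terms $T_{\mathcal{C}(1)}(r(1))$ and $T_{\mathcal{C}^{*}}(r^{*})$. Your write-up is in fact more explicit about the vertex sequence and the double-counting than the paper's one-line ``after expansion'' argument, but the content is identical.
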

	
	\begin{proof}
		
		We note that the cost of the transformation path constructed by Algorithm~\ref{Alg: Construction} is $$T = \sum_t \{ T_{r(t)}(\mathcal{C}(t+1)) + T_{\mathcal{C}(t)}(r(t+1)) \}.$$   

		Since $(r(1),\mathcal{C}(1))$ and $(r^{*}, \mathcal{C}^{*})$ are two equilibria, the costs $T_{\mathcal{C}(1)}(r(1)) = T_{\mathcal{C}^*}(r^*) = 0$. After expansion, the cost of the transformation path $T = 2 \cdot \{T_{r(1)}(\mathcal{C}(2)) + T_{\mathcal{C}(2)}(r(3)) + \cdots + T_{r(l)}(\mathcal{C}^*)\} = 2 \cdot Cost(P)$. 
		
		As the length of the input alternating path in Algorithm~\ref{Alg: Construction} is $l$, and the output transformation path has a length of $l+1$, the length of the output transformation path is one longer than the input alternating path.
	\end{proof}

	Directly following Lemma~\ref{lemma: Lower_bound} and Theorem~\ref{Theorem:alternating-Path}, we have a corollary as follows.
	
	\begin{corollary}
		The cost of the optimal transformation path from the initial equilibrium $(r(1), \mathcal{C}(1))$ to the target equilibrium $(r^{*}, \mathcal{C}^{*})$ is exactly two times the smallest cost of an alternating path connecting $r(1)$ or $\mathcal{C}(1)$ and $r^{*}$ or $\mathcal{C}^{*}$.
	\end{corollary}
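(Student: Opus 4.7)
The plan is to establish the two inequalities separately, one from Lemma~\ref{lemma: Lower_bound} and the other from Theorem~\ref{Theorem:alternating-Path}, and then combine them.

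For the lower bound, I would let $P^{\star}$ denote the optimal transformation path from $(r(1),\mathcal{C}(1))$ to $(r^{*},\mathcal{C}^{*})$ and decompose it into its two alternating paths $A_{1}$ and $A_{2}$ as illustrated in Figure~\ref{fig:alternating-path}. By Lemma~\ref{lemma: Lower_bound}, the cost of $P^{\star}$ is at least $2\min\{Cost(A_{1}),Cost(A_{2})\}$. Since both $A_{1}$ and $A_{2}$ are alternating paths connecting an endpoint in $\{r(1),\mathcal{C}(1)\}$ to an endpoint in $\{r^{*},\mathcal{C}^{*}\}$, their minimum cost is at least the cost of a globally minimum such alternating path $P^{\mathrm{alt}}_{\min}$. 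Hence $Cost(P^{\star})\geq 2\cdot Cost(P^{\mathrm{alt}}_{\min})$.

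For the upper bound, I would take $P^{\mathrm{alt}}_{\min}$, the alternating path of smallest cost among those connecting $r(1)$ or $\mathcal{C}(1)$ to $r^{*}$ or $\mathcal{C}^{*}$. A small bookkeeping point is that Algorithm~\ref{Alg: Construction} is written for an alternating path whose endpoints are $r(1)$ and $\mathcal{C}^{*}$; for the other three combinations of endpoint types, the same construction applies after the obvious relabeling (swapping the roles of row and column vertices at either end), and in each case Theorem~\ref{Theorem:alternating-Path} yields a valid transformation path from $(r(1),\mathcal{C}(1))$ to $(r^{*},\mathcal{C}^{*})$ whose cost is exactly $2\cdot Cost(P^{\mathrm{alt}}_{\min})$. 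Therefore the optimal transformation cost is at most $2\cdot Cost(P^{\mathrm{alt}}_{\min})$.

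Combining the two bounds gives $Cost(P^{\star}) = 2\cdot Cost(P^{\mathrm{alt}}_{\min})$, as claimed. The only non-routine step I anticipate is verifying that the construction in Algorithm~\ref{Alg: Construction} extends symmetrically to alternating paths whose endpoints are $\mathcal{C}(1)$ or $\mathcal{C}^{*}$ on the column side; this is the main subtlety, but it amounts to mirroring the assignment of the $t$-th vertex between the odd and even cases and using that $T_{\mathcal{C}(1)}(r(1))=T_{\mathcal{C}^{*}}(r^{*})=0$ at the two equilibrium endpoints to telescope the cost.
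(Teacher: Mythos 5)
Your proposal is correct and matches the paper's (implicit) argument: the paper states the corollary as following directly from Lemma~\ref{lemma: Lower_bound} (which gives the lower bound via the decomposition into two alternating paths) and Theorem~\ref{Theorem:alternating-Path} (which gives the matching upper bound via Algorithm~\ref{Alg: Construction}), exactly as you do. Your explicit remark about extending the construction to the other three endpoint combinations is a point the paper glosses over with a ``without loss of generality,'' and your handling of it is sound.
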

	
	Moreover, if the alternating path with the smallest cost connecting $r(1)$ or $\mathcal{C}(1)$ and $r^{*}$ or $\mathcal{C}^{*}$ is known, then we can construct the optimal transformation path from the initial equilibrium $(r(1),\mathcal{C}(1))$ to the target equilibrium $(r^{*}, \mathcal{C}^{*})$.
	In addition, Algorithm~\ref{Alg: Construction} also allows us to upper bound the length of an optimal transformation path, i.e., the total number of rounds needed to move from the initial equilibrium to the target equilibrium. 
	
	
	
	\begin{theorem}\label{thm:length}
		Given any initial equilibrium $(r(1),\mathcal{C}(1))$ and target equilibrium $(r^{*}, \mathcal{C}^{*})$, there exists an optimal transformation path whose length is at most $2m-1$.
	\end{theorem}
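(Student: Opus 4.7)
The plan is to invoke the preceding Corollary, which equates the optimal transformation cost with $2c^{*}$ where $c^{*}$ is the minimum cost of an alternating path connecting $\{r(1), \mathcal{C}(1)\}$ with $\{r^{*}, \mathcal{C}^{*}\}$, and then to exhibit a minimum-cost alternating path with at most $m$ row vertices. First I would argue that one may assume, without loss of generality, that a minimum-cost alternating path begins at the row vertex $r(1)$ and ends at the row vertex $r^{*}$. Given any minimum-cost alternating path, if it begins at $\mathcal{C}(1)$, prepend the edge $r(1)\to \mathcal{C}(1)$; its weight $T_{r(1)}(\mathcal{C}(1))$ equals $0$ because $(r(1),\mathcal{C}(1))$ is a Nash equilibrium. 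Analogously, if the path ends at $\mathcal{C}^{*}$, append the edge $\mathcal{C}^{*}\to r^{*}$, whose weight $T_{\mathcal{C}^{*}}(r^{*})$ is $0$ since $(r^{*}, \mathcal{C}^{*})$ is also a Nash equilibrium. Both modifications preserve cost as well as the alternation pattern.

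Next, I would apply a standard shortcut argument to ensure that this minimum-cost alternating path uses each row strategy at most once. If some row strategy appeared at two distinct positions, the sub-walk between those occurrences would be a closed alternating walk of nonnegative total weight, which can be excised while preserving the endpoints $r(1)$ and $r^{*}$ and the alternating structure, strictly decreasing the number of vertices without increasing the cost. Iterating yields a minimum-cost alternating path $A^{\circ}$ from $r(1)$ to $r^{*}$ whose $q$ row vertices are all distinct, so that $q\le m$; such a path also has exactly $q-1$ column vertices, for $2q-1$ vertices in total.

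Finally, I would apply the natural analog of Algorithm~\ref{Alg: Construction} for the case when the input alternating path has row endpoints. Writing the vertices of $A^{\circ}$ as $r(1), \mathcal{C}(2), r(3), \mathcal{C}(4), \ldots, r(2q-1)=r^{*}$, the constructed transformation path is $(r(1),\mathcal{C}(1)) \to (r(1),\mathcal{C}(2)) \to (r(3),\mathcal{C}(2)) \to \cdots \to (r^{*},\mathcal{C}(2q-2)) \to (r^{*},\mathcal{C}^{*})$, which has $2q$ vertices and hence length $2q-1 \le 2m-1$. The telescoping calculation from the proof of Theorem~\ref{Theorem:alternating-Path} carries over verbatim and shows that its total cost equals $2\,\mathrm{Cost}(A^{\circ}) = 2c^{*}$, so this transformation path is optimal. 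The main obstacle is the first step: verifying that the two zero-cost extensions are legitimate, which depends essentially on the Nash equilibrium conditions holding at the initial and target strategy profiles. The remaining steps reduce to routine cycle removal and a direct adaptation of Algorithm~\ref{Alg: Construction}.
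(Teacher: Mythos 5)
Your proof is correct, and its core is the same as the paper's: take a minimum-cost alternating path, shortcut repeated row-player strategies (all edge weights are nonnegative, and an edge's weight depends only on its two endpoints, so excision never increases cost), conclude that at most $m$ row vertices suffice, and then convert via Algorithm~\ref{Alg: Construction} with the telescoping cost identity. Where you genuinely diverge is in handling the four endpoint combinations. The paper keeps all four cases ($r(1)$ or $\mathcal{C}(1)$ to $r^{*}$ or $\mathcal{C}^{*}$) and, for the column-endpoint cases, separately argues that $r(1)$ (resp.\ $r^{*}$) cannot profitably appear in the interior, trading one row strategy for the extra column endpoint to still land on $l\le 2m-2$. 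You instead normalize everything to the row-to-row case by prepending the edge $r(1)\to\mathcal{C}(1)$ and appending $\mathcal{C}^{*}\to r^{*}$, both of which have zero weight precisely because the initial and target profiles are Nash equilibria; this collapses the case analysis into one clean argument and avoids the paper's somewhat informal ``does not appear in the middle'' claims. The only small cost of your route is that you must supply the row-to-row analogue of Algorithm~\ref{Alg: Construction} (the paper's version is stated for paths ending at a column profile), but your explicit construction and vertex count ($2q$ vertices, hence length $2q-1\le 2m-1$) do exactly that.
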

	
	\begin{proof}
		Denote the length of an alternating path with the smallest cost connecting $r(1)$ or $\mathcal{C}(1)$ and $r^{*}$ or $\mathcal{C}^{*}$ by $l$. On one hand, according to Algorithm~\ref{Alg: Construction}, the length of the optimal transformation path constructed on the basis of this alternating path is $l+1$. On the other, we note that there exists such an alternating path on which the row player strategies are not repeated. As otherwise, assume that a row player strategy $r^i$ is repeated on an alternating path as $\dots \rightarrow r^i \rightarrow \dots \rightarrow r^i \rightarrow \mathcal{C}^j \rightarrow \dots$, then we can remove the vertices in between two strategies $r^i$ and modify the alternating path to be $\dots \rightarrow r^i \rightarrow \mathcal{C}^j \rightarrow \dots$. This way, we obtain an alternating path whose cost is no larger than the original alternating path. In addition, as there are at most $m$ row player strategies on this alternating path and there are two edges between two consecutive row player strategies, we have that $l\le 2(m-1)=2m-2$, if this alternating path connects $r(1)$ and $r^{*}$. If this alternating path connects $r(1)$ and $C^{*}$, we further note that $r^{*}$ does not appear in the middle of this alternating path. Otherwise, the cost is not smaller than the cost of an alternating path that connects $r(1)$ and $r^{*}$. Thus, $l\le 1+2(m-1-1)=2m-3$. If this alternating path connects $\mathcal{C}(1)$ and $r^{*}$, we can prove that $r(1)$ does not appear in the middle so that $l \le 1+2(m-1-1)=2m-3$. If this alternating path connects $\mathcal{C}(1)$ and $\mathcal{C}^{*}$, we can prove that either $r(1)$ or $r^{*}$ does not appear in the middle in the same way. Thus, $l\le 2+ 2(m-1-1) = 2m-2$. According to the above analysis of all possible cases, $l\le 2m-2$. Thus, the length of the optimal transformation path is at most $2m-2+1=2m-1$. 
	\end{proof}

	\subsection{Approximation Algorithm}
	In this subsection, we use the properties of the alternating path established above to design an approximation algorithm. We will construct a complete directed graph in which the vertices are row player strategies, and the weight of an edge is the cost of a length-2 alternating path connecting two adjacent vertices. Although the exact minimum cost of these length-2 alternating paths is hard to compute, as otherwise, the problem OPT TRANSITION $(k,m,n)$ will become tractable following our construction, we approximate these costs by rounding the solutions of integer linear programmings. Building upon this complete directed graph, we can assemble an alternating path with nearly-optimal cost between any two row-player strategies. We can extend this approach by an additional handling technique to assemble a minimum-cost alternating path between any row or column player strategies.
	This way, we can approximate the cost of an alternating path connecting $r(1)$ or $\mathcal{C}(1)$ and $r^{*}$ or $\mathcal{C}^{*}$. Building upon the alternating path with the smallest cost among these four paths, we can erect the optimal transformation path between the initial and target equilibria.

	
    We start by constructing a weighted directed complete graph $G(V, E)$. In graph $G$, each vertex $v_i$ corresponds to a strategy $r^{i}$ of the row player. From each vertex $v_{i}$ to a different vertex $v_{j}$, there is an edge $e_{ij}$. The weight $w_{ij}$ of edge $e_{ij}$ is the cost of an alternating path $r^{i} \rightarrow \mathcal{C}_{ij} \rightarrow r^{j}$, where $\mathcal{C}_{ij} \in \mathscr{C}^{k}$ is a column players' strategy profile. Clearly, the smaller $w_{ij}$ is, the better we can approximate the solution to the problem OPT TRANSITION $(k,m,n)$. In Algorithm~\ref{ALG: Alt}, we employ a sub-routine \textbf{WEIGHT($r^{i}$, $r^{j}$)} to compute $w_{ij}$ and the corresponding column players' strategies $\mathcal{C}_{ij}$. Now that we have finished the construction of the graph $G$, we can compute the shortest path between any two vertices $v_i$ and $v_j$. In the meantime, we insert the corresponding column player strategy profiles $\mathcal{C}_{ij}$ into every adjacent vertex $v_i$ and $v_j$, to construct an alternating path (of the equilibrium transition problem) from the shortest path (of graph $G$).
	Clearly, the shortest path $P_{r^{i}r^{j}}$ between $v_i$ and $v_j$ corresponds to an approximately optimal alternating path from $r^{i}$ to $r^{j}$. If both $s_1$ and $s_2$ are row player strategies, then no additional treatment on the shortest path is needed. So Algorithm~\ref{ALG: Alt} returns $P'=P_{s_1s_2}$; otherwise, to obtain an approximately optimal alternating path from $s_1$ to $s_2$, we need to make an additional comparison to append one more column player strategy in front and at the end of the shortest path, so that the cost of the output alternating path is as small as possible.

 \begin{algorithm}[ht!]
		\KwInput{Strategy sets $\mathscr{R}$ and $\mathscr{C}^{k}$; \\
			Payoff matrices $R_{m \times n}$ and $C_{m \times n}$; \\
			Initial equilibrium $(r(1),\mathcal{C}(1))$ and target equilibrium $(r^{*}, \mathcal{C}^{*})$; \\
			$s_1 \in \{ r(1), \mathcal{C}(1) \}$, $s_2 \in \{r^{*}, \mathcal{C}^{*} \}$. }
		\KwOutput{An alternating path from $s_1$ to $s_2$} 
		
		Initialize a complete directed graph $G(V, E)$\;
		\For{$i$ from $1$ to $m$}{
			$v_i \gets r^{i}$\;
			\For{$j$ from $1$ to $m$}{
				$e_{ij}\gets$ an edge from $v_i$ to $v_j$\;
				$(w_{ij}, \mathcal{C}_{ij}) \gets$ WEIGHT($r^{i}, r^{j}$)\;
			}
		}
		\For{$i$ from $1$ to $m$}{
			\For{$j$ from $1$ to $m$}{
				$P_{r^ir^j}\gets$ the shortest path between $v_i$ and $v_j$\;
				\For{each edge $e_{pq}$ along $P_{r^ir^j}$}{
					Insert $C_{pq}$ between $v_p$ and $v_q$\;
				}
			}
		}
		
		\If{$s_1, s_2\in \mathscr{R}$}{
			$P' \gets P_{s_1s_2}$\;
		}
		\If{$s_1\in \mathscr{R}$ and $s_2\in \mathscr{C}^{k}$}{
			$P' \gets \argmin_{P_{s_1q}}\left(Cost(P_{s_1q})+T_{q}(s_2)\right)$\;
			Append $s_2$ to the end of $P'$\;
		}
		\If{$s_1\in \mathscr{C}^{k}$ and $s_2\in \mathscr{R}$}{
			$P' \gets \argmin_{P_{qs_2}}\left(Cost(P_{qs_2})+T_{s_1}(q)\right)$\;
			Insert $s_1$ in the front of $P'$\;
		}
		\If{$s_1, s_2 \in \mathscr{C}^{k}$}{
			$P' \gets \argmin_{P_{pq}}\left(T_{s_1}(p)+Cost(P_{pq})+T_{q}(s_2)\right)$\;
			Append $s_2$ to the end of $P'$\;
			Insert $s_1$ in the front of $P'$\;
		}
		return $P'$
		\caption{Construction of an Approximately Optimal Alternating Path $s_1 \rightarrow \dots \rightarrow s_2$}\label{ALG: Alt}
	\end{algorithm}

\setlength{\textfloatsep}{0pt}
	In the following, we present an integer linear programming formulation used to approximate the cost of a length-2 alternating path in the sub-routine WEIGHT($r^{i}$, $r^{j}$).
	
	\paragraph{Sub-routine WEIGHT($r^{i}$, $r^{j}$).} 
	Denote $x_q$ the number of column players who play the strategy $c^{q} \in \mathscr{C}$. Then the column strategy profile $\mathcal{C}_{ij}\in \mathscr{C}^{k}$ is determined given a tuple $(x_1,\dots, x_n)$.
	The hurdle to approximate $T_{r^i}(\mathcal{C}_{ij}) + T_{\mathcal{C}_{ij}}(r^j)$ is that  $T_{\mathcal{C}_{ij}}(r^{j})$ contains the $\max$ operator. To detour it, we introduce $m$ integer linear programming (ILPs). For the $z$-th ILP, we eliminate the operator $\max$ by introducing a constraint that $r^{z}$ is row player's best response strategy with respect to $\mathcal{C}_{ij}$. By allowing $x_q$ to be a positive number, we further relax these ILPs to linear programmings (LPs). The $z$-th linear programming is shown as follows.
	
	
	\begin{equation}
		\begin{aligned}
			\min \quad & T_{r^{i}}(\mathcal{C}_{ij}) + \sum_{1\le q \le n} x_qR(r^{z}, c^{q}) - \sum_{1\le q \le n} x_qR(r^{j}, c^{q})\\
			\text{s.t.} \quad & 
			\sum_{1\le q \le n} x_qR(r^{z}, c^{q}) \ge \sum_{1\le q \le n}x_qR(r^{z'},c^{q}), \,\,\ \forall z' \in [m]\\
			&x_1+\dots+x_n = k \\
			&x_q \ge 0, \,\,\   \forall q \in [n] 
		\end{aligned}
		\nonumber
	\end{equation}
	
	The objective function is the cost, $T_{r^i}(\mathcal{C}_{ij}) + T_{\mathcal{C}_{ij}}(r^j)$, of the length-2 alternating path $r^i \rightarrow \mathcal{C}_{ij} \rightarrow r^j$. It is linear in the variables $x_q$'s. The first constraint ensures that $r^{z}$ is the row player's best response strategy to column player strategy profile $\mathcal{C}_{ij}$. The remaining constraints state that the total number of column players equals precisely $k$ and $x_q$ is non-negative. After solving these $m$ LPs, we take the one that gives the minimum value $w_{ij}^{*}$ of the objective functions. Without loss of generality, assume that it is the $z$-th LP that achieves the minimum value. Hence, the objective function can be rewritten as $k\max_{p}C(r^{i}, c^{p})-\sum_{1\le q\le n}x_q f(c^{q})$, where $f(c^{q})$ is $C(r^{i}, c^{q})-R(r^{z}, c^{q})+R(r^{j}, c^{q})$. We can retrieve a strategy profile $\mathcal{C}_{ij}$ given the solution $(x_1,\dots,x_n)$. Albeit the solutions $x_q$'s of the LPs are not necessarily integers, for now, we can interpret them as a mixed-strategy profile. 
	
	
	To round the fractional solution to an integral solution, we first round each $x_q$ down to the nearest integer $\lfloor x_q\rfloor$, $q=1,\dots,n$. Let $u= \argmin_{q}\max_{p}R(r^{p}, c^{q})$, we then increase $x_u$ by $k - \sum_{1\le q\le n}\lfloor x_q\rfloor$. Till now, we have derived an integer solution $(\lfloor x_1 \rfloor,\dots, \lfloor x_n \rfloor)$, from which we can retrieve a valid strategy profile $\mathcal{C}_{ij}$. Based on $\mathcal{C}_{ij}$, we obtain an alternating path $r^i \rightarrow \mathcal{C}_{ij} \rightarrow r^j$ with cost $w_{ij}$.
	
	Next, we bound the difference between the optimal objective value of the LPs and the transition cost $w_{ij}$ that we derived from the alternating path. 
	
	\begin{lemma}~\label{Lemma: Two-step}
		$w_{ij} - w_{ij}^{*} \le 2\|R\|_{1,1}+\|C\|_{1,1}$, where $\|R\|_{1,1}$ and $\|C\|_{1,1}$ are the sum of all entries of $R$ and $C$, respectively.
	\end{lemma}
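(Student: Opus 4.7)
The plan is to decompose $w_{ij} - w_{ij}^{*}$ into three pieces corresponding to the three non-constant terms of the LP objective and bound each using a telescoping identity built on the fact that the rounding preserves $\sum_q x_q = k$. Setting $\Delta_q := x_q' - x_q$, I would first observe that $\Delta_q = -\{x_q\}$ for $q \neq u$, $\Delta_u = \sum_{q \neq u}\{x_q\}$, and $\sum_q \Delta_q = 0$. The central identity is
\[
\sum_q \Delta_q\, a_q \;=\; \sum_{q \neq u} \{x_q\}\,\bigl(a_u - a_q\bigr),
\]
valid for any scalars $a_q$, together with the trivial entrywise bounds $|C(r^{p}, c^{q})| \le \|C\|_{1,1}$ and $|R(r^{p}, c^{q})| \le \|R\|_{1,1}$.

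Next I would expand
\begin{align*}
w_{ij} - w_{ij}^{*}
&= -\sum_q \Delta_q\, C(r^{i}, c^{q}) \\
&\quad + \Bigl[\max_{z'} \sum_q x_q'\, R(r^{z'}, c^{q}) - \sum_q x_q\, R(r^{z}, c^{q})\Bigr] \\
&\quad - \sum_q \Delta_q\, R(r^{j}, c^{q}),
\end{align*}
noting that the $k\max_p C(r^{i}, c^{p})$ contributions from $T_{r^{i}}(\mathcal{C}_{ij})$ cancel exactly. The first and third pieces are then bounded by $\|C\|_{1,1}$ and $\|R\|_{1,1}$ respectively, by applying the identity with $a_q = C(r^{i}, c^{q})$ (resp.\ $R(r^{j}, c^{q})$), using $\{x_q\} < 1$, and absorbing each resulting entry into the corresponding matrix norm. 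For the middle bracket, I would invoke the LP constraint ensuring that $z$ is a best response to the fractional $x$, so that $\sum_q x_q R(r^{z}, c^{q}) = \max_{z'} \sum_q x_q R(r^{z'}, c^{q})$; this collapses the bracket to $\max_{z'} \sum_q \Delta_q\, R(r^{z'}, c^{q})$. Reapplying the identity and then using the defining property of $u = \argmin_q \max_p R(r^{p}, c^{q})$, which guarantees $R(r^{z'}, c^{u}) \le \max_p R(r^{p}, c^{u}) \le \max_p R(r^{p}, c^{q})$ for every $q$, bounds this remaining piece by $\|R\|_{1,1}$. Summing the three contributions yields $2\|R\|_{1,1} + \|C\|_{1,1}$.

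The hard part will be the middle bracket: commuting the $\max$ with the signed sum and keeping the estimate clean requires simultaneously invoking LP-optimality of $z$ against the fractional $x$ and the specific $u$ chosen during rounding. The role of $u$ is precisely to guarantee that the mass $\Delta_u$ concentrated at column $c^{u}$ does not inflate the new row-player best response beyond what the row-$u$ entries already contribute to $\|R\|_{1,1}$. The first and third terms, in contrast, follow almost mechanically from the identity and the crude entrywise norm bound.
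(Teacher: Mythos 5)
Your proposal is correct and takes essentially the same route as the paper: the paper uses the identical decomposition of $w_{ij}-w_{ij}^{*}$ (the $k\max_{p}C(r^{i},c^{p})$ terms cancel, the signed sums against $C(r^{i},\cdot)$ and $R(r^{j},\cdot)$ are bounded by $\|C\|_{1,1}+\|R\|_{1,1}$ using $x_q-x_q'<1$, and the best-response bracket is collapsed via the LP constraint on $r^{z}$ and bounded by $n\max_{p}R(r^{p},c^{u})\le\|R\|_{1,1}$ using the choice of $u$). Your telescoping identity in the fractional parts is just more explicit bookkeeping of the same estimates, so no further comparison is needed.
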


		\begin{proof}
		Denote $(x_1,\dots, x_n)$ and $(x_1', \dots, x_n')$ the fractional solution and the integral solution, respectively. Without loss of generality, assume $z=\arg\max\limits_{1\le p\le m}\sum\limits_{1\le q \le n}x_qR(r^{p}, c^{q})$. We have that
		\begin{equation*}
			\begin{split}
				w_{ij}^{*}&=k\max_{q}C(r^{i}, c^{q})+\sum_{1\le q\le n}x_qR(r^{z},c^{q}) \\
				&-\sum_{1\le q\le n}x_q\left(C(r^{i},c^{q})+R(r^{j},c^{q})\right)
			\end{split}
		\end{equation*}
		In addition, $w_{ij}$ can be written as
		\begin{equation*}
			\begin{split}
				w_{ij}&=k\max_{q}C(r^{i}, c^{q})+\max_{p}\sum_{1\le q\le n}x_q'R(r^{p}, c^{q})\\
				&-\sum_{1\le q\le n}x_q'\left(C(r^{i},c^{q})+R(r^{j},c^{q})\right)
			\end{split}
		\end{equation*}
		
		The difference between $w_{ij}$ and the cost of $(x_1,\dots, x_n)$ can be computed directly as 
		\begin{equation*}
			\begin{split}
				\max_{p} &\sum_{1\le q\le n}x_q'R(r^{p}, c^{q})-\sum_{1\le q\le n}x_qR(r^{z}, c^{q})\\
				&+\sum_{1\le q\le n}(x_q-x_q')\left(C(r^{i},c^{q})+R(r^{j},c^{q})\right)
			\end{split}
		\end{equation*}
		For the term $\sum_{1\le q\le n}(x_q-x_q')\left(C(r^{i},c^{q})+R(r^{j},c^{q})\right)$, according to the rounding procedure, $x_q-x_q'< 1$, this term is no more than $\|R\|_{1,1}+\|C\|_{1,1}$. Let $u= \argmin_{q}\max_{p}R(r^{p}, c^{q})$. Since $x_q'\le x_q$ for all $q\not=u$ and $x_u'-n\le x_u$, $\max_{p}\sum_{1\le q\le n}x_q'R(r^{p}, c^{q})-\sum_{1\le q\le n}x_qR(r^{s}, c^{q})$ is no more than $n\max_{p}R(r^{p}, c^{u})$, which is no more than $\|R\|_{1,1}$. Combining the analysis above, we prove the lemma.
	\end{proof}

	\begin{theorem}\label{Theorem: App_Alt}
		Algorithm~\ref{ALG: Alt} returns an alternating path between $s_1$ and $s_2$ whose cost is at most $m(2\|R\|_{1,1}+\|C\|_{1,1})$ more than the cost of the optimal alternating path between $s_1$ and $s_2$.
	\end{theorem}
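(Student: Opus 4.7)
The plan is to compare the cost of the path returned by Algorithm \ref{ALG: Alt} against the cost of an arbitrary optimal alternating path $P^{*}$ from $s_1$ to $s_2$, by decomposing $P^{*}$ into length-2 sub-paths between consecutive row strategies and applying Lemma \ref{Lemma: Two-step} to each. First, I would argue, exactly as in the proof of Theorem \ref{thm:length}, that we may assume without loss of generality that $P^{*}$ contains each row player strategy at most once; otherwise, shortcutting between two occurrences of the same $r^{i}$ only decreases the cost. Hence $P^{*}$ visits row strategies $r^{i_1}, r^{i_2}, \ldots, r^{i_t}$ in order with $t \le m$, and it naturally decomposes into: an optional prefix edge $s_1 \to r^{i_1}$ (present iff $s_1 \in \mathscr{C}^{k}$), a middle portion consisting of $t-1$ consecutive length-2 sub-paths $r^{i_s} \to \mathcal{C}^{*}_s \to r^{i_{s+1}}$, and an optional suffix edge $r^{i_t} \to s_2$ (present iff $s_2 \in \mathscr{C}^{k}$).

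Second, I would bound the middle portion via the graph $G$. For each pair $(i_s, i_{s+1})$, the strategy profile $\mathcal{C}^{*}_s$ together with the row player's best response to it provides a feasible integer solution to one of the $m$ ILPs underlying WEIGHT$(r^{i_s}, r^{i_{s+1}})$; hence the LP optimum satisfies $w^{*}_{i_s i_{s+1}} \le \mathrm{cost}(r^{i_s} \to \mathcal{C}^{*}_s \to r^{i_{s+1}})$. Combining this with Lemma \ref{Lemma: Two-step} yields $w_{i_s i_{s+1}} \le \mathrm{cost}(r^{i_s} \to \mathcal{C}^{*}_s \to r^{i_{s+1}}) + 2\|R\|_{1,1} + \|C\|_{1,1}$. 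Summing over $s = 1, \ldots, t-1$, the walk $v_{i_1} \to v_{i_2} \to \cdots \to v_{i_t}$ in $G$ has total weight at most the middle-portion cost of $P^{*}$ plus $(t-1)(2\|R\|_{1,1} + \|C\|_{1,1})$, and the shortest path $P_{r^{i_1} r^{i_t}}$ can only be smaller.

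Third, I would sweep through the four endpoint cases of Algorithm \ref{ALG: Alt}. When both $s_1, s_2 \in \mathscr{R}$, the output is $P_{s_1 s_2}$ and the bound follows immediately. When exactly one of $s_1, s_2$ is in $\mathscr{C}^{k}$, the algorithm takes the arg-min over an endpoint $q$, which in particular is no worse than the choice $q = r^{i_1}$ or $q = r^{i_t}$ from $P^{*}$; since the prefix/suffix terms $T_{s_1}(\cdot)$ or $T_{\cdot}(s_2)$ are computed exactly (no LP rounding), no extra additive loss is incurred on those edges. The doubly-column case is analogous, using both $p = r^{i_1}$ and $q = r^{i_t}$. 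In every case the total additive error is at most $(t-1)(2\|R\|_{1,1} + \|C\|_{1,1}) \le m(2\|R\|_{1,1} + \|C\|_{1,1})$. The main obstacle will be verifying carefully in the two column-endpoint cases that the arg-min does not pair the prefix with a suboptimal shortest path in a way that breaks the decomposition; for this I will explicitly exhibit the feasible choice derived from $P^{*}$'s endpoints and show that the algorithm's arg-min can only improve on it, while keeping track of the fact that the LP lower-bound argument only applies to the middle length-2 segments and not to the exact-cost prefix/suffix edges.
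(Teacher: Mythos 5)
Your proposal is correct and follows essentially the same route as the paper's proof: decompose the optimal alternating path at its row-strategy vertices, lower-bound each length-2 segment by the LP optimum $w^{*}_{ij}$, apply Lemma~\ref{Lemma: Two-step} edge by edge, sum over at most $m$ edges, and handle the four endpoint cases using the exactly computed prefix/suffix terms. Your write-up is in fact slightly more explicit than the paper's (the no-repeated-row-strategy reduction and the feasibility argument showing $w^{*}_{i_s i_{s+1}}$ lower-bounds each segment), but the substance is identical.
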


	\begin{proof}
		If $s_1, s_2\in \mathscr{R}$, we first quantify the cost of the optimal alternating path. Assume that there exists a directed complete graph $G^{*}$ of which the vertex $v_i^{*}$ is $r^{i}$ and the weight of a directed edge $e_{ij}^{*}$ from $v_{i}^{*}$ to $v_{j}^{*}$ is $w_{ij}^{*}$. Recall that $w_{ij}^{*}$ is the minimum value of the objective functions via solving $m$ LPs. Thus, the cost of any two-round alternating path between $r^{i}$ and $r^{j}$ is at least $w_{ij}^{*}$. Let $P_{s_1s_2}^{*}$ denote the shortest path from $s_1$ to $s_2$ in the graph $G^{*}$. The cost of the optimal alternating path is $\sum_{e_{ij}^{*}\in P_{s_1s_2}^{*}}w_{ij}^{*}$. Clearly, we can find the same path from $s_1$ to $s_2$ in the graph $G$ with a cost of $\sum_{e_{ij}^{*}\in P_{s_1s_2}^{*}}w_{ij}$. According to Lemma~\ref{Lemma: Two-step} and the fact that $P_{s_1s_2}^{*}$ contains no more than $m$ edges, we have
		$$\sum_{e_{ij}^{*}\in P_{s_1s_2}^{*}}(w_{ij}-w_{ij}^{*})\le m(2\|R\|_{1,1}+\|C\|_{1,1}).$$
		As Algorithm~\ref{ALG: Alt} finds the shortest path between $s_1$ and $s_2$ in graph $G$, the additive error is upper bounded by $m(2\|R\|_{1,1}+\|C\|_{1,1})$ as well.
		
		If $s_1\in \mathscr{R}$ and $s_2\in \mathscr{C}^{k}$, without loss of generality, assume the second last vertex of the optimal alternating path between $s_1$ and $s_2$ is $r^{q}\in \mathscr{R}$. Let $P_{s_1r^{q}}$ denote the shortest path from $s_1$ to $r^{q}$ in graph $G^{*}$. The cost of the optimal path is $\sum_{e_{ij}^{*}\in P_{s_1r^{q}}^{*}}w_{ij}^{*}+T_{r^{q}}(s_2)$. According to Lemma~\ref{Lemma: Two-step} and the fact that $P_{s_1s_2}^{*}$ contains no more than $m$ edges, we also have
		$$\sum_{e_{ij}^{*}\in P_{s_1r^{q}}^{*}}(w_{ij}-w_{ij}^{*})\le m(2\|R\|_{1,1}+\|C\|_{1,1})$$
		As Algorithm~\ref{ALG: Alt} returns a path with a cost no more than $\sum_{e_{ij}^{*}\in P_{s_1r^{q}}^{*}}w_{ij}+T_{r^{q}}(s_2)$, the additive error is upper bounded by $m(2\|R\|_{1,1}+\|C\|_{1,1})$ as well. For the other two cases, the additive error is upper bounded in the same way.
	\end{proof}

	Finally, we bound the additive approximation error. 
	
	\begin{theorem}\label{Theorem: Approx}
		Denote $\OPT$ the cost of the optimal transformation path from $(r(1),\mathcal{C}(1))$ to $(r^{*}, \mathcal{C}^{*})$. We can implement Algorithm~\ref{ALG: Alt} to find an approximately optimal alternating path and then Algorithm~\ref{Alg: Construction} to construct a transformation path with a cost $\OPT + 2m(2\|R\|_{1,1}+\|C\|_{1,1})$ in polynomial time.
	\end{theorem}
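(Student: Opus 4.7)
The plan is to combine the structural result relating transformation paths and alternating paths with the approximation guarantee for alternating paths, and then do a bookkeeping of costs and running time.

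First, I would invoke Algorithm~\ref{ALG: Alt} four times, once for each pair $(s_1, s_2)$ with $s_1 \in \{r(1), \mathcal{C}(1)\}$ and $s_2 \in \{r^{*}, \mathcal{C}^{*}\}$, and take the alternating path $P'$ of smallest cost among the four outputs. By Theorem~\ref{Theorem: App_Alt}, each returned path is within an additive error of $m(2\|R\|_{1,1}+\|C\|_{1,1})$ of the cost of the optimal alternating path for that choice of endpoints. Taking the minimum over the four cases therefore yields an alternating path whose cost $\mathrm{Cost}(P')$ is at most
\begin{equation*}
\min_{s_1,s_2} \mathrm{Cost}^{*}(s_1,s_2) + m\bigl(2\|R\|_{1,1}+\|C\|_{1,1}\bigr),
\end{equation*}
where $\mathrm{Cost}^{*}(s_1,s_2)$ denotes the smallest cost of an alternating path connecting $s_1$ and $s_2$.

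Next, I would feed $P'$ into Algorithm~\ref{Alg: Construction} to produce a transformation path $\mathcal{T}$ from $(r(1),\mathcal{C}(1))$ to $(r^{*},\mathcal{C}^{*})$. By Theorem~\ref{Theorem:alternating-Path}, $\mathrm{Cost}(\mathcal{T}) = 2\,\mathrm{Cost}(P')$. On the other hand, the Corollary following Theorem~\ref{Theorem:alternating-Path} tells us that $\OPT = 2\min_{s_1,s_2}\mathrm{Cost}^{*}(s_1,s_2)$. Chaining these two identities with the additive bound above gives
\begin{equation*}
\mathrm{Cost}(\mathcal{T}) \;\le\; \OPT + 2m\bigl(2\|R\|_{1,1}+\|C\|_{1,1}\bigr),
\end{equation*}
which is the claimed approximation guarantee.

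The remaining step is to verify polynomial running time. The graph $G$ in Algorithm~\ref{ALG: Alt} has $m$ vertices, so $O(m^2)$ calls to the sub-routine \textbf{WEIGHT} are made. Each call solves $m$ linear programs, each with $n$ variables and $O(m+n)$ constraints whose coefficients come directly from $R$ and $C$; these are solvable in polynomial time by any standard LP algorithm, and the subsequent rounding step is elementary. The shortest-path computations on $G$ and the four comparisons at the end of Algorithm~\ref{ALG: Alt} add only polynomial overhead, and Algorithm~\ref{Alg: Construction} runs in time linear in the length of its input, which is at most $2m-1$ by Theorem~\ref{thm:length}. The main thing to be careful about is that the fractional LP solution and its rounded version both live inside the additive error accounted for in Lemma~\ref{Lemma: Two-step}, so Theorem~\ref{Theorem: App_Alt} can be invoked as a black box; once that is observed, the rest of the argument is just the arithmetic above.
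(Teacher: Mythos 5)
Your proposal is correct and follows essentially the same route as the paper: run Algorithm~\ref{ALG: Alt} for the four endpoint pairs, take the cheapest alternating path, convert it via Algorithm~\ref{Alg: Construction}, and combine Lemma~\ref{lemma: Lower_bound} (equivalently the corollary that $\OPT$ equals twice the minimum alternating-path cost) with Theorem~\ref{Theorem: App_Alt} to get the additive bound $2m(2\|R\|_{1,1}+\|C\|_{1,1})$. Your explicit accounting of the polynomial running time (the $O(m^2)$ \textbf{WEIGHT} calls, each solving $m$ LPs, plus shortest paths) is a welcome addition that the paper's own proof leaves implicit.
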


	\begin{proof}
		Through Algorithm~\ref{ALG: Alt}, we obtain four alternating paths that start from either $r(1)$ or $\mathcal{C}(1)$ and ends at either $r^{*}$ or $\mathcal{C}^{*}$. Taking the one with the minimum cost as the input of Algorithm~\ref{Alg: Construction}, we obtain a transformation path. According to Lemma~\ref{lemma: Lower_bound} and Theorem~\ref{Theorem: App_Alt}, the difference between the cost of this transformation path and the optimal transformation path is upper bounded by $2m(2\|R\|_{1,1}+\|C\|_{1,1})$.
	\end{proof}
	
	We note that the advantage of this additive approximation is that it is independent of the number of column players $k$. In many practical scenarios such as increasing the uptake of electric vehicles and retail store businesses, $k$ is the number of drivers/customers which is significantly larger than the number of player strategies. 
	
	\section{Tractable Cases}
	In this section, we turn our attention to the cases in which the optimal transformation path can be found in polynomial time. Recall the example in the Introduction: in the game, the row player is a service provider and the column players are the customers. We assume that all possible locations are distributed on a straight line. That is, the players' strategy space $\mathscr{R}=\mathscr{C}$. In particular, the row player is limited to choosing a location which is one of the column player strategies. The service provider seeks to minimize the sum of the Euclidean distance between its location and the locations of customers. A customer's payoff is negatively correlated with its distance to the service provider. With a slight abuse of notations, we denote $r^i,c^j$ the axis of the locations on the line. W.l.o.g., we assume they are both sorted in increasing orders.
	
	\begin{definition}
		In this one-dimensional domain, a player's payoff is \textit{single-peaked} if it is maximized at a single point on the line, and the further its location to this point, the less its payoff. A player's payoff is \textit{exact-distance} if it is equal to the negative value of the difference between its location to a single point on the line. 
	\end{definition}
	
	In particular, we are interested in the case in which the row player's payoff is exact-distance and the column players' payoff is single-peaked. That is, given the strategy profile $(r^{i}, c_{1}^{j_1},\dots,c_{k}^{j_k})$, $j_l \in [m]$, $l=1,\dots,k$, the row player's payoff is $\sum_{l=1}^{k} R(r^{i}, c_{l}^{j_l}) = - \sum_{l=1}^{k} |r^{i} - c_{l}^{j_l}|$, and the column player $l$'s payoff is $C(r^{i}, c_{l}^{j_l}) = g(|r^{i} - c_{l}^{j_l}|)$, where $g(\cdot)$ is monotone decreasing and $g(0)=0$. In this case, we show that the problem is tractable.
	
	\begin{theorem}\label{thm:single-peak}
		When the row player's payoff is exact-distance and the column players' payoff is single-peaked, the optimal transformation path from an initial equilibrium to a target equilibrium can be found in polynomial time.
	\end{theorem}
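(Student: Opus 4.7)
The plan is to exploit the one-dimensional geometric structure together with the homogeneity of the column players to restrict the search space of transformation paths to a polynomial-size set, and then find the optimum by a shortest-path computation.

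First, I would characterize the Nash equilibria. Because every column player's payoff $g(|r-c|)$ is uniquely maximized at $c=r$ (since $g(0)=0$ and $g$ is strictly decreasing), and $\mathscr{R}\subseteq\mathscr{C}$, every column player's unique best response to $r$ is $c=r$; the row player's payoff $-\sum_l|r-c_l|$ is then also uniquely maximized at $r$. Consequently every pure Nash equilibrium has all columns matching the row's position, so $(r(1),\mathcal{C}(1))=(r(1),r(1),\dots,r(1))$ and $(r^{*},\mathcal{C}^{*})=(r^{*},r^{*},\dots,r^{*})$. Without loss of generality assume $r(1)\le r^{*}$ on the line.

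Next, I would establish two structural properties of an optimal transformation path. \textbf{(i)~Monotonicity}: there exists an optimal path in which the row's location sequence is non-decreasing and contained in $[r(1),r^{*}]$. Any back-and-forth or out-of-interval excursion can be excised via an exchange argument that uses the triangle inequality on the row's exact-distance payoff together with the monotonicity of $g$ on the column side. \textbf{(ii)~Bounded support}: in every round, the column players occupy at most two distinct locations lying in $\mathscr{R}\cap[r(1),r^{*}]$. Here I would fix the row strategies adjacent to the column profile on the corresponding alternating path (Theorem~\ref{Theorem:alternating-Path}) and show that any mass at an interior support point can be redistributed to one of the two outer support points without raising the combined two-round cost $T_{r(t)}(\mathcal{C}(t+1))+T_{\mathcal{C}(t+1)}(r(t+2))$.

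With these structural properties in hand, I would encode the problem as a shortest-path problem on a polynomial-size state graph. Each state is a tuple $(r,c_{L},c_{R},x)$, where $r\in\mathscr{R}$ is the row's location, $c_{L}\le c_{R}$ in $\mathscr{R}\cap[r(1),r^{*}]$ are the two column support locations (allowing $c_{L}=c_{R}$), and $x\in\{0,\dots,k\}$ is the number of column players at $c_{L}$. The state space has size $O(m^{3}k)$, and by Theorem~\ref{thm:length} the relevant paths have length at most $2m-1$. Edge weights $T_{\mathcal{C}(t)}(r(t+1))+T_{r(t)}(\mathcal{C}(t+1))$ can be computed in $O(1)$ directly from the state descriptions, and a standard shortest-path algorithm (Dijkstra, or topological-order DP using monotonicity) recovers the optimum in polynomial time.

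The main obstacle is the bounded-support lemma. The column incentive cost is a well-behaved sum of single-peaked terms in the column positions, but the row incentive cost $T_{\mathcal{C}}(r)=\sum_{l}|r-c_{l}|-\min_{r'}\sum_{l}|r'-c_{l}|$ depends on the median of $\mathcal{C}$, which is piecewise constant in the column positions. Hence the exchange argument cannot rely on smooth convexity and must split into cases according to whether the redistribution of the interior support's mass crosses the median of the profile. Within each case I expect the total cost to be a concave function of a single redistribution variable, so that the optimum is attained at a boundary, forcing the interior support to be emptied and completing the claim.
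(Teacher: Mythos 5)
Your high-level plan---characterize the optimal column profile between two consecutive row positions and then solve a shortest-path problem over a polynomial-size state space---is essentially the paper's reduction (there it is phrased via alternating paths: the subroutine WEIGHT$(r^i,r^j)$ computes the optimal intermediate profile $\mathcal{C}$ for $r^i\rightarrow\mathcal{C}\rightarrow r^j$, and a shortest path is taken in the complete graph over row strategies). Your monotonicity claim (i) is unproven but also unnecessary if you run Dijkstra on the state graph, so everything rests on the bounded-support lemma (ii).

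That lemma is false as stated: two column locations do not suffice in general; three can be necessary. Writing the cost of a segment $r^i\rightarrow\mathcal{C}\rightarrow r^j$ with $r^i<r^j$, sorted profile $c^1\le\dots\le c^k$ in $[r^i,r^j]$, and median index $h=\lfloor (k+1)/2\rfloor$, one gets
\[
T_{r^i}(\mathcal{C})+T_{\mathcal{C}}(r^j)\;=\;kr^j+(k-2h)\,c^h-2\sum_{l>h}c^l-\sum_{l=1}^{k}g(c^l-r^i),
\]
so the players below the median, the median player, and the players above the median carry linear coefficients $0$, $k-2h\in\{-1,0\}$, and $-2$, respectively. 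Since $g$ is only assumed monotone decreasing (not concave), a term of the form $-\alpha c-g(c-r^i)$ is not concave in $c$ and its minimizer depends on $\alpha$; for odd $k$ the median player generically prefers a position strictly between $r^i$ (where the below-median players sit) and the common optimum of the above-median players. Concretely, with $g(d)=-d^2$, $r^i=0$, $k=3$, and column strategies at $0,\tfrac14,\tfrac12,\tfrac34,1$ inside $[r^i,r^j]$, the optimal profile is $(0,\tfrac12,1)$ with cost $3r^j-\tfrac54$, while the best two-point profile only reaches $3r^j-\tfrac98$; your redistribution variable has a \emph{convex}, interior-minimized cost, so it cannot be pushed to a boundary, which is exactly the failure mode you flagged as "the main obstacle." The repair is cheap: either enlarge the state to three support locations (still polynomial), or follow the paper's route---pin the $h-1$ lowest players at $r^i$, note that all above-median players share one optimal location, and enumerate the $O(n^2)$ candidate pairs consisting of the median's position and that common upper position.
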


	\begin{proof}
		In Theorem \ref{Theorem: App_Alt}, we have shown that if $WEIGHT(r^i,r^j)$ finds a length-2 optimal alternating path from $r^i$ to $r^j$, then Algorithm~\ref{ALG: Alt} returns the optimal transformation path. 
		Thus, we only need to devise an algorithm for finding the column strategy profile $\mathcal{C}=(c^1,\dots,c^k)$ for the alternating path $r^{i} \rightarrow \mathcal{C} \rightarrow r^{j}$. 
		
		W.l.o.g., we assume the two row strategy locations $r^{i} < r^{j}$, and the $k$ column strategy locations are sorted in order of $c^1\le c^2\le \dots \le c^k$. Then, the cost of the alternating path $r^{i} \rightarrow \mathcal{C} \rightarrow r^{j}$ is
		\begin{equation*}
			\begin{split}
				&T_{r^{i}}(\mathcal{C})+T_{\mathcal{C}}(r^{j}) \\
				=& \Big\{ k\cdot \max_{c^* \in \mathscr{C}}g(|r^i-c^*|) - \sum_{l=1}^k g(|r^i-c^l|) \Big\} \\
				&	+ \Big\{ \max_{r^* \in  \mathscr{R}} \Big[ - \sum_{l=1}^k |r^*-c^l| \Big] - \Big[ - \sum_{l=1}^k |r^j-c^l| \Big] \Big\}.
			\end{split}
		\end{equation*}
		Note that $\mathscr{R}=\mathscr{C}$ since both the row player strategy and column player strategies are locations on a line, we have that $\max\limits_{c^* \in \mathscr{C}}g(|r^i-c^*|) = 0$ by setting $c^* = r^i$. 
		Since $c^l, l=1,\dots,k$ is sorted in an ascending order, the sum $\sum_{l=1}^k |r^*-c^l|$ achieves the minimum when $r^*=c_h$, where $h=\lfloor \frac{k+1}{2}\rfloor$.
		Therefore, the cost of the alternating path $r^{i} \rightarrow \mathcal{C} \rightarrow r^{j}$ becomes 
		\begin{equation*}
			\begin{split}
				- &\sum_{l=1}^k g(|r^i-c^l|) - \sum_{l\le h}(c_h-c^l) + \\
				& \big(-\sum_{l>h}(c^l-c_h) + \sum_{l=1}^k |r^j - c^l|\big) \label{eq1}
			\end{split}
		\end{equation*}
		
		We note that in the minimum-cost alternating path $r^{i} \rightarrow \mathcal{C} \rightarrow r^{j}$, the $c^l$'s of $\mathcal{C}$ locate in between $r^i$ and $r^j$. That is, $r^i \le c^l \le r^j, l=1,\dots,k$. We will show that $r^i \le c^l , l=1,\dots,k$, and the proof for the other direction is similar.
		
		Assume for contradiction that in $\mathcal{C} = (c^{1},\dots,c^{k})$, there exists $d \in \{1,\dots,k\}$ such that $c^1 \le \cdots \le c^d < r^i \le c^{d+1}$. Denote a different column player strategy profile $\mathcal{C}' = (c'^{1},\dots,c'^{k}) = (r^i,\dots,r^i,c^{d+1},\dots,c^{k})$. 
		That is, the first $d$ strategies are moved to $r^i$. We will show that $\mathcal{C}'$ yields a lower transition cost than $\mathcal{C}$ as the intermediate step of a length-2 alternating path.  
		
		There are two subcases here, either $d\le h$ or $d> h$. We consider the first case, and the analysis for the second case is similar.
		
		As the order of individual column player's strategy location is unchanged, i.e., $c'^{1} \le \dots \le c'^{k}$, in $T_{r^{i}}(\mathcal{C}')+T_{\mathcal{C}'}(r^{j})$, it still holds that 
		the sum $\sum_l|r^*-c'^l|$ achieves its minimum when $r^*=c_h$, where $h=\lfloor \frac{k+1}{2}\rfloor$. Therefore, the cost of the alternating path $r^{i} \rightarrow \mathcal{C}' \rightarrow r^{j}$ is 
		\begin{equation}\label{eq2}
			\begin{split}
				& - \sum_{l=1}^k g(|r^i-c'^l|) - \sum_{l\le h}(c_h-c'^l) + \\
				&\big(-\sum_{l>h}(c'^l-c_h) + \sum_{l=1}^k |r^j - c'^l|\big) \nonumber\\
				= &-\sum_{l>d}g(c'^l-r^i)+d\cdot(r^j-r^i)+\sum_{l>d}|r^j-c^l| \\
				&	-\sum_{l\le d}(c_h-r^i)-\sum_{d< l \le h}(c_h-c^l)-\sum_{l>h}(c^l-c_h)
			\end{split}
		\end{equation}
		
		To compare the transition costs generated by $\mathcal{C}$ and $\mathcal{C}'$, we deduct Equation \eqref{eq2} from Equation \eqref{eq1} and get that
		\begin{equation*}
			\begin{split}
				&-\sum_{l\le d}g(|r^i-c^l|)
				+\sum_{l\le d}(r^j-c^l) \\
				&+ \sum_{l\le d}(c_h-r^i)+ 
				\big(- d(r^j-r^i) - \sum_{l\le d}(c_h-c^l)\big)\nonumber \\
				=&-\sum_{l\le d}g(|r^i-c^l|)+\sum_{l\le d}(r^i-c^l) - \sum_{l\le d}(r^i-c^l) \\
				=&-\sum_{l\le d}g(|r^i-c^l|)>0
			\end{split}
		\end{equation*}
		Thus, strategy profile $\mathcal{C}'$ yields a lower transition cost.
		
		Therefore, in the minimum-cost alternating path $r^{i} \rightarrow \mathcal{C} \rightarrow r^{j}$, the $c^l$'s of $\mathcal{C}$ locate in between $r^i$ and $r^j$. This fact enables us to get rid of the absolute value operator in Equation \eqref{eq1}. So, the transition cost of the alternating path $r^{i} \rightarrow \mathcal{C} \rightarrow r^{j}$ is
		\begin{equation*}
			\begin{split}
				- &\sum_{l=1}^k g(c^l-r^i) - \sum_{l\le h}(c_h-c^l)-\sum_{l>h}(c^l-c_h) + \sum_{l=1}^k (r^j - c^l)\\
				=& \,\ \sum_{l\le h}\big[-g(c^l-r^i)+(r^j-c_h)\big] \\
				&+\sum_{l>h}\big[-g(c^l-r^i)+(r^j+c_h-2c^l)\big].
			\end{split}
		\end{equation*}
		Since $g(\cdot)$ is monotone decreasing and $g(0)=0$, to minimize the transition cost, we should set $c^l=r^i$ for $l\le h$. Then, the cost becomes
		\begin{align}\label{eq3}
			h(r^j-c_h) + \sum\limits_{l>h}\big[-g(c^l-r^i)+(r^j+c_h-2c^l)\big].
		\end{align}
		Next, for each possible $c_h \in \{c^1,\dots,c^n\} \cap [r^i,r^j]$, denote strategy $c^{l^*}=\argmin_{c^l\ge c_h} \{-g(c^l-r^i)+c_h-2c^l\}$. By comparing the value of Equation \eqref{eq3} for each combination of $c_h$ and $c^{l^*}$, we can obtain the optimal pair of these strategies that minimize the transition cost. Denote them by $c^{h^*}$ and $c^{l^*}$. Since there are at most $n$ column strategies, there are at most $n^2$ combinations that can be efficiently enumerated.  
		
		Till now, we have computed the strategy profile $\mathcal{C}=(c^1,\dots,c^k)=(r^i,\dots,r^i,c^{h^*},c^{l^*},\dots,c^{l^*})$ that minimizes the transition cost of the alternating path $r^{i} \rightarrow \mathcal{C} \rightarrow r^{j}$.
		
		Finally, we adopt Algorithm \ref{ALG: Alt} to find the optimal alternating path by implementing the $WEIGHT(r^i,r^j)$ procedure. Then, by implementing Algorithm~\ref{Alg: Construction}, we can construct the optimal transformation path from an initial equilibrium to a target equilibrium.
	\end{proof}
	In this setting, the payoff matrices have nice properties which enable us to find the optimal alternating paths. 
 

\section{Conclusions and Future Work}
	In this paper, we formulated an optimization problem to find the optimal transformation path from an initial equilibrium to a more desirable one. In this game, players move simultaneously in each round. We presented comprehensive complexity analyses of the problem. We proved that the problem is APX-hard when the number of the row player strategies, the number of the column player strategies, and $k$ are input sizes. Furthermore, we showed that the problem is slicewise polynomial with respect to $k$ and $n$, respectively. As for the parameter $m$, we proved that the problem is NP-Hard even when $m=2$. Besides the hardness result, we designed a polynomial-time approximation algorithm with bounded additive error. Moreover, the approximation error is independent of the number of column players $k$. Finally, we considered cases where we can find the optimal transformation path in polynomial time.
	
    It is important to acknowledge that the problem we have analyzed presents intractability in various setups, making any generalizations a formidable task. Viable solutions would likely require the application of additional restrictions or careful consideration of special cases. However, the realm of possibilities for further exploration is vast and captivating. Firstly, delving into the extension of our study to encompass scenarios where column players possess diverse strategies and payoff matrices holds great interest. Secondly, exploring the implications of incorporating randomized strategies and how they might alter the overall dynamics of the game poses an intriguing challenge. Thirdly, developing efficient algorithms for other cases, such as utilizing alternative distance measures to define player strategies corresponding to locations on general graphs and their associated payoffs, presents an enticing avenue for investigation. Additionally, broadening the scope of the study to incorporate multiple row players opens up exciting possibilities for research and discovery.

\section*{Acknowledgments}
Zihe Wang was partially supported by the National Natural
Science Foundation of China (Grant No. 62172422). Jie Zhang was partially supported by a Leverhulme Trust Research Project Grant (2021 -- 2024) and the EPSRC grant (EP/W014912/1).

\bibliography{aaai24}

\end{document}